\documentclass[11pt]{article}

\usepackage[T1]{fontenc}
\usepackage[utf8]{inputenc}
\usepackage{lmodern}
\usepackage{amsmath,amssymb,amsthm,mathtools,bm}
\usepackage{geometry}
\usepackage{booktabs}
\usepackage{csquotes}
\usepackage{array}
\usepackage{tabularx}
\usepackage{enumitem}
\usepackage{tikz}
\usetikzlibrary{arrows.meta}
\usepackage[hidelinks]{hyperref}
\usepackage[nameinlink,noabbrev]{cleveref}
\usepackage[
  backend=biber,
  style=numeric-comp,
  sorting=none,
  giveninits=true,
  maxbibnames=5,
  maxcitenames=2,
  eprint=true,
  doi=true,
  url=false,
  isbn=false
]{biblatex}
\newtheorem{theorem}{Theorem}
\newtheorem{proposition}{Proposition}

\newtheorem{remark}{Remark}

\newcommand{\dd}{\mathrm{d}}

\newcommand{\WB}{\mathrm{WB}}

\newcommand{\diag}{\mathrm{diag}}

\DeclareMathOperator{\tr}{tr}
\DeclareMathOperator{\spanop}{span}

\title{Tamm--Rubilar branch diagnostics for Drummond--Hathrell photon propagation: Schwarzschild calibration and a Kerr weak-lensing benchmark}
\author{Jos\'e Rodal\\Rodal Consulting, Cary, NC 27513, USA\\\texttt{jrodal@alum.mit.edu}}
\date{\today}

\begin{document}
\maketitle

\begin{abstract}
This paper develops a local Tamm--Rubilar branch diagnostic for supplied
nondispersive, pair-symmetric constitutive tensors and uses it as a
preprocessing layer for polarization-resolved ray transport.  The diagnostic
constructs the local quartic \(P_x(q)={\cal G}^{abcd}q_aq_bq_cq_d\), certifies
real ADM-oriented branches and root margins, and separates algebraic branch
stability from the separate effective-field-theory question of when a
Drummond--Hathrell (DH) low-frequency surrogate is applicable.

For constitutive tensors with a reflection isometry, the adapted bivector matrix
has the exact block form \(G_{++}\oplus G_{--}\).  The resulting full
parity-invariant Tamm--Rubilar polynomial is a twelve-variable quartic with
coefficients even in the transverse momentum.  A restricted six-variable
meridional SSSW-frame quartic is retained as a compact analytic benchmark and
root-margin test, while generic supplied tensors are evaluated through the full
invariant contraction.

The physical calibration is the Ricci-flat DH curvature sector.  In
Schwarzschild, the full parity tensor factorizes and reproduces the standard
radial/no-shift and orbital/split low-frequency result.  The corresponding
first-order screen reduction gives branch Hamiltonians, a phase/group-delay
formula at fixed asymptotic Killing frequency, and a branch-labelled retardance
for a specified polarized input.

The flagship rotating-spacetime benchmark is an infinity-to-infinity weak-lensing
calculation in the linearized Kerr field.  Daniels and Shore established the
local Kerr DH propagation problem; here the complementary asymptotic question is
answered in a transported Born screen.  The local slow-Kerr magnetic-Weyl angle
\(\chi_K=-3a\cos\theta/(2r)+O(a^3/r^3)\) is a principal-frame eigenbasis tilt,
whereas the signed endpoint eigenaxis mismatch cancels at leading order.  The
surviving branch-delay benchmark is
\[
 \Delta t_{+-}^{\rm Kerr,Born}=
 \frac{4\alpha}{45\pi}\lambda_e^2\frac{M}{b^2}
 \left[1+2\frac{\bm J\cdot(\hat{\bm k}\times\hat{\bm b})}{Mb}\right]
 +O\!\left(\zeta_{\rm DH}\frac{M^2}{b^3},
 \zeta_{\rm DH}\frac{J^2}{Mb^4},(\zeta_{\rm DH}R)^2\right),
\]
with \(\delta_{+-}=\omega_\infty\Delta t_{+-}\).  The explicit Riemann
projection from the linearized Kerr metric to the Born screen is given so the
spin-odd term is reproducible.  Astrophysical estimates give
\(\delta_{+-}\sim10^{-24}\)--\(10^{-23}\) at few-keV energies for representative
compact-object grazing rays; the Kerr calculation is therefore presented as a
scale-setting benchmark for the diagnostic and transport framework, not as a
near-term detectability claim.
\end{abstract}

\vspace{2pc}

\noindent{\it Keywords}: Drummond--Hathrell effective action; Tamm--Rubilar tensor; Kerr weak lensing; birefringence; area-metric geometry; QED in curved spacetime; local Fresnel polynomial

\vspace{1pc}
\noindent\textit{Notation.}  The QED length scale is the reduced electron Compton wavelength $\lambda_e=m_e^{-1}$ in units $\hbar=c=1$.  The flat-slice symbols $\rho,\mathcal A,\mathcal B,\mathcal D$ are used only in the appendix-level tensorial testbed; $C_3,C_2,C_1,C_0$ denote quartic coefficients throughout.

\section{Scope and main results}
\label{sec:scope}

The manuscript has two linked aims.  The first is methodological: given a local
nondispersive pair-symmetric constitutive tensor \(G^{abcd}(x)\), supplied by an
independent model or effective action, construct the Tamm--Rubilar \cite{Tamm1925,Rubilar2002,HehlObukhov} polynomial
\(P_x(q)={\cal G}^{abcd}q_aq_bq_cq_d\) and certify whether its local Fresnel roots
are real, ADM-time oriented, branch-separated, and perturbatively close to the
metric double cone.  This pointwise algebraic screen is designed for automated
polarization ray codes, where it prevents integration across complex-root branch
jumps, near-degenerate branch swaps, or sign catastrophes before Hamiltonian
transport is attempted.

The second aim is a controlled physical calibration.  The Drummond--Hathrell (DH) \cite{DrummondHathrell,Shore2003}
curvature coupling supplies a fixed low-frequency input tensor on Ricci-flat
backgrounds.  Schwarzschild provides an exactly factorized local test, and a
weak-lensing Kerr calculation provides a rotating-spacetime benchmark for the
same branch-labelling and retardance machinery.  The framework is local and
algebraic; global hyperbolicity, gravitational closure, backreaction, and full
frequency-dependent QED propagation are separate layers.

Daniels and Shore \cite{DanielsShore1996} established the local DH photon-propagation problem in Kerr,
including velocity shifts, the polarization sum rule, and horizon/ergosphere
properties.  The question addressed here is complementary: after fixing an
asymptotic Sachs screen and a branch labelling for a two-ended weak-lensing
experiment, the local magnetic-Weyl eigenbasis tilt has zero leading endpoint
mismatch, while the branch-delay split has a reproducible spin-odd Born term.

The full algebraic object used below is the parity-reduced polynomial
\begin{equation}
 P_{\rm par}(\omega,x,y,z)=\omega^4+C_3\omega^3+C_2\omega^2+C_1\omega+C_0,
 \label{eq:intro-Ppar}
\end{equation}
where a reflection isometry gives the bivector block form
\(G_{++}\oplus G_{--}\) and makes each coefficient even in the transverse
momentum \(z\).  A restricted meridional SSSW\cite{SSSW2017}-frame polynomial,
\begin{equation}
 P_\Psi(\omega,x,y,z)=\omega^4+C_2\omega^2+C_1\omega+C_0,
 \label{eq:intro-Ppsi}
\end{equation}
is printed because its coefficients are short enough to audit.  It is a
sectoral benchmark, not a replacement for the transverse block.

For DH input the tensor is retained only through
\(O(\alpha R/m_e^2)\).  The roots are therefore low-frequency roots of the
supplied nondispersive surrogate.  They are not the high-frequency QED
wavefront characteristics, which require the frequency-dependent polarization
operator.  The paper consequently separates the following levels of statement:
\begin{center}
\small
\begin{tabularx}{\textwidth}{@{}p{0.31\textwidth}p{0.39\textwidth}X@{}}
\toprule
Claim & Status & Location \\
\midrule
Parity block form and even-in-\(z\) coefficients & exact algebraic theorem & Sec.~\ref{sec:full-parity-polynomial} \\
Six-variable SSSW quartic & analytic sectoral benchmark; expanded form in supplement & Sec.~\ref{sec:meridional-sector}; App.~\ref{app:expanded-coefficients} \\
Schwarzschild DH factorization & exact for the supplied surrogate; physical at first DH order & Sec.~\ref{sec:dh} \\
Ricci-flat screen branches & first-order DH propagation reduction & Sec.~\ref{sec:ricci-flat-propagation} \\
Kerr Born result & weak-lensing branch-delay/phase-retardance benchmark & Sec.~\ref{subsec:slow-kerr} \\
\bottomrule
\end{tabularx}
\end{center}

The rotating-spacetime benchmark is the Kerr weak-lensing branch-delay split
\eqref{eq:kerr-born-retardance}.  The convention used throughout is
branch-labelled: \(\lambda_+\) is the upper screen eigenvalue continued from the
Schwarzschild tangential branch, \(\lambda_-\) is the lower branch, and
\(\Delta t_{+-}:=t_- - t_+\).  Thus \(\Delta t_{+-}\) is a signed correction to a
positive branch separation in the weak-lensing domain, while \(|\Delta t_{12}|\)
denotes an unlabeled positive magnitude.  The dimensionless optical retardance
for a monochromatic input is \(\delta_{+-}=\omega_\infty\Delta t_{+-}\).  The
slow-Kerr magnetic-Weyl angle is retained as a local screen-eigenbasis diagnostic,
not as the scattering observable.

The formal area-metric setting is the one used in constructive gravity and
premetric electrodynamics \cite{SchullerWohlfarth2006,PunziSchullerWohlfarth2007,SchullerWitte2014,SSSW2017,HehlObukhov,Hehl2008MinkowskiPremetric}.  Recent work on quadratic area-metric actions and area-metric backgrounds emphasizes that area metrics carry non-length degrees of freedom and that their reduction toward ordinary metric geometry is a separate algebraic and dynamical question \cite{BorissovaDittrichKrasnov2024,BorissovaHo2024}.  Birefringent propagation and global optical relations provide the neighboring phenomenological setting \cite{RaetzelRiveraSchuller2011,Alex2020,SchullerWerner2017DistanceDuality}.  The operational contribution below is the local algebraic layer: construct \(P_x(q)\), certify roots and margins, attach the EFT status marker when the input is DH, and then pass accepted branches to ray transport.  Recent gravitational-birefringence estimates and modern strong-field polarimetry motivate such careful polarization transport in lensing environments \cite{MurkTernoVadapalli2025,Steiner2024IXPECygX1}; the numerical Kerr scale is recorded to delimit, not advertise, observational relevance.

\section{Meridional support and reflection bookkeeping}
\label{sec:meridional-bookkeeping}

The restricted SSSW polynomial used in \cref{sec:principal} is motivated by a
fixed meridional bivector support
\[
  \mathcal E=\spanop\{e_0\wedge e_1,\ e_0\wedge e_2,\ e_1\wedge e_2\}
  =\spanop\{[01],[02],[12]\}.
\]
For the flat-slice shift-potential testbed summarized in
\cref{app:shift-background}, the ADM-normal connection generically rotates a
rank-one meridional ansatz out of itself unless
\(\mathcal A\mathcal D-\mathcal B^2=0\).  At the equator of that testbed this
quantity is \(-v^2(g')^2\), so a generic transition region forces the full
three-dimensional meridional support.  This argument is only kinematic; it is
not a gravitational-closure equation.

The same reflection symmetry gives the important algebraic split.  Under
\(P_\phi:\phi\mapsto-\phi\), the adapted tetrad has \(e_3\mapsto-e_3\) while
\(e_0,e_1,e_2\) are fixed.  Hence
\[
 V_+=\spanop\{[01],[02],[12]\},\qquad
 V_-=\spanop\{[03],[13],[23]\}.
\]
Any pair-symmetric constitutive tensor that inherits the reflection has
\(G_{+-}=0\), but the transverse block \(G_{--}\) is symmetry-allowed and must be
included in the full Tamm--Rubilar polynomial.  This is why the printed
six-variable polynomial is used only as a compact meridional benchmark.

\section{Restricted meridional SSSW-frame diagnostic and local parity-invariant extension}
\label{sec:principal}
\label{sec:meridional-sector}

The meridional block maps to the sparse six-variable sector
\begin{equation}
 \Psi=(\phi_1,\phi_2,\phi_4,\phi_{12},\phi_{16},\phi_{17})
\end{equation}
by the normalized component coordinates
\begin{equation}
 h_{11}=\phi_1,
 \quad h_{12}=\frac{\phi_2}{\sqrt2},
 \quad h_{22}=\phi_4,
 \quad h_{33}=\phi_{12},
 \quad h_{23}=\frac{\phi_{16}}{\sqrt2},
 \quad h_{13}=\frac{\phi_{17}}{\sqrt2} .
 \label{eq:normalized-meridional-coordinates}
\end{equation}
Thus the off-diagonal symmetric modes use the usual $1/\sqrt2$ normalization.  Define
\begin{equation}
 u=1+\phi_1,
 \qquad s=\frac{\phi_2}{\sqrt2},
 \qquad w=1+\phi_4,
 \qquad \ell=1+\phi_{12},
 \qquad \Delta^2=uw-s^2 .
 \label{eq:sssw-definitions}
\end{equation}
Whenever \(\Delta\) appears without the square, it denotes the positive square root of \(uw-s^2\) on the symmetric-positive branch.
The upper and lower symmetric SSSW blocks are
\begin{equation}
 U=\begin{pmatrix}u&s&0\\s&w&0\\0&0&1\end{pmatrix},
 \qquad
 L=\begin{pmatrix}1&0&0\\0&1&0\\0&0&\ell\end{pmatrix} .
\end{equation}
For the mixed block write
\begin{equation}
 m_{17}=\frac{\phi_{17}}{\sqrt2},
 \qquad m_{16}=\frac{\phi_{16}}{\sqrt2},
 \qquad N=(u+1)(w+1)-s^2,
\end{equation}
and
\begin{equation}
 c=\frac{2[(w+1)m_{17}-s m_{16}]}{N},
 \qquad
 d=\frac{2[(u+1)m_{16}-s m_{17}]}{N},
 \qquad
 a=uc+sd,
 \qquad
 b=sc+wd .
\end{equation}
Then
\begin{equation}
 W=g^i{}_j=\begin{pmatrix}0&0&a\\0&0&b\\c&d&0\end{pmatrix} .
 \label{eq:mixed-block-W}
\end{equation}
satisfies the SSSW frame condition $WU=(WU)^T$.  The reconstruction is used only
on the symmetric-positive branch: $U>0$ and $\ell>0$.  On this branch
\(\Delta^2=uw-s^2>0\) and
\(N=(u+1)(w+1)-s^2=\Delta^2+u+w+1>0\), so the mixed-block denominator is not a
separate singularity.

\begin{theorem}[Restricted meridional SSSW-frame diagnostic]
\label{thm:restricted-sssw-polynomial}
\label{thm:restricted-polynomial}
Let the restricted meridional state be
\[
\Psi=(\phi_1,\phi_2,\phi_4,\phi_{12},\phi_{16},\phi_{17})
\]
and let \(u,s,w,\ell,\Delta,a,b,c,d\) be defined by
\eqref{eq:sssw-definitions}--\eqref{eq:mixed-block-W}.  Assume the admissible
symmetric-positive branch $U>0$ and $\ell>0$; hence \(\Delta^2>0\) and \(N>0\).
For a covector \(q_a=(\omega,x,y,z)\) in the orthonormal frame, direct
contraction of the SSSW principal tensor with the reconstructed six-variable
area metric gives a raw quartic whose leading coefficient is \(\Delta^2\).  On
this branch, multiplication by a positive nonzero scalar does not change the
characteristic set.  After monic normalization by this coefficient, the
restricted root polynomial is
\begin{equation}
 \boxed{P_\Psi(\omega,x,y,z)=\omega^4+C_2\omega^2+C_1\omega+C_0 .}
\end{equation}
The cubic term vanishes identically because the reconstruction imposes
\(WU=(WU)^T\).  In the metric specialization
\[
 u=w=\ell=1,\qquad s=a=b=c=d=0,
\]
one obtains
\[
P_\Psi=(\omega^2-x^2-y^2-z^2)^2 .
\]
\end{theorem}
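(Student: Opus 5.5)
The plan is to assemble the area metric in $3\times3$ block form, contract it with the Tamm--Rubilar/SSSW principal tensor so that the quartic is written through block determinants and adjugates, and read off the claims from that form. In the SSSW frame a pair-symmetric area metric on bivectors is written in the basis $([01],[02],[03];[23],[31],[12])$ as
\[
G=\begin{pmatrix} -U & W^{T}\\ W & L\end{pmatrix}\quad\text{(up to the frame's sign conventions)},
\]
with $U$, $L$ symmetric and $W$ the mixed block \eqref{eq:mixed-block-W}. For $q=(\omega,\vec k)$ with $\vec k=(x,y,z)$, the standard premetric formula (Rubilar; Hehl--Obukhov) gives the Fresnel polynomial as a sum of minors:
\[
\omega^4\det U,\qquad \omega^3\,\epsilon$-contractions of $\operatorname{adj}(U)$ with $W$,
\]
then $\omega^2$ terms quadratic in $\vec k$, $\omega$ terms cubic, and $k^4$ terms from $\operatorname{adj}(L)$-type minors. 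I will not re-derive this from scratch. I will use the explicit SSSW principal tensor, contract it with the reconstructed six-parameter metric, and organize the output by powers of $\omega$.

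\textbf{Leading coefficient.} The $\omega^4$ term involves only the electric block $U$, so the raw coefficient is $\det U=(uw-s^2)\cdot1=\Delta^2$. On the admissible branch $U>0$ this is strictly positive. Dividing the raw quartic by it therefore preserves the zero set $\{P=0\}$, which is the characteristic set. This step is immediate.

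\textbf{Cubic coefficient.} The $\omega^3$ coefficient is linear in the mixed block and has the form $\epsilon^{\alpha\beta\gamma}k_\gamma\,(WU)_{[\alpha\beta]}$, or equivalently a contraction of $\operatorname{adj}(U)W$ antisymmetrized against $\vec k$. It vanishes iff the antisymmetric part of $WU$ (or of a congruent product) vanishes. The explicit $c,d$ are constructed so that $WU=(WU)^T$, and I will verify this directly. Computing $WU$:
\[
WU=\begin{pmatrix}0&0&a\\0&0&b\\uc+sd&sc+wd&0\end{pmatrix},
\]
which is symmetric precisely when $a=uc+sd$ and $b=sc+wd$, i.e.\ by the definitions. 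Hence $C_3\equiv0$. This is the step I expect to be the main obstacle. One must pin down the exact index placement and signs in the SSSW principal tensor, to confirm that the cubic coefficient is proportional to the antisymmetric part of $WU$ and not of $UW$ or $W^TU^{-1}$. I would first check this on a single off-diagonal perturbation, for example only $c\neq0$, to fix the conventions.

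\textbf{Remaining coefficients and metric limit.} The coefficients $C_2,C_1,C_0$ are simply defined as the monic-normalized remainders; their expanded forms are deferred to App.~\ref{app:expanded-coefficients}, and only their existence is asserted. For the metric specialization $U=L=\mathbb 1$, $W=0$, the area metric reduces to $G=g\wedge g$ for Minkowski $g$. The Tamm--Rubilar tensor then yields the known $P=(\det g)^{\cdot}\,(g^{ab}q_aq_b)^2$, i.e.\ $(\omega^2-x^2-y^2-z^2)^2$ with leading coefficient $1$, consistent with $\Delta^2=1$. As an independent check, in this limit each $\omega$-coefficient can be computed directly: $\omega^4\cdot1$, $\omega^2\cdot(-2|\vec k|^2)$, and $|\vec k|^4$.
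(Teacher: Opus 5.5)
Your route is essentially the paper's (\cref{app:quartic-derivation} and \cref{app:expanded-coefficients}). You contract the Tamm--Rubilar tensor with the reconstructed tensor, read off the leading coefficient $\Delta^2=\det U$, remove the $\omega^3$ term via $WU=(WU)^T$, and check the vacuum. You also supply structural reasons that the paper leaves to symbolic verification: the $\omega^4$ coefficient sees only the electric block, and the $\omega^3$ coefficient is linear in the magneto-electric block. Your leading-coefficient and monic-normalization steps are fine.

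The point to repair is the one you flagged. It is settled by the reconstruction, not by the principal tensor, and your written block form gets it wrong. The paper's tensor is $G^{0i0j}=-U^{ij}$, $G^{0ijk}=\Delta(\delta^i{}_m+W^i{}_m)\epsilon^{mjk}$, $G^{ijkl}=\Delta^2\epsilon^{ijm}\epsilon^{kln}L_{mn}$. So the $[0i]$-row mixed block is $\gamma=\Delta(I+W)$, not $W^T$, and the magnetic block is $\Delta^2 L$.

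Eliminating $B$ from the plane-wave equations gives the operator $\omega^2U+\omega(\gamma K-K\gamma^T)+\Delta^2KLK$ with $K=[k]_\times$, up to the sign convention for $\omega$. Hence the cubic coefficient is proportional to $\operatorname{tr}(\operatorname{adj}(U)\gamma K)=\Delta\,k_l\epsilon^{lij}(\operatorname{adj}(U)W)_{ij}$. This vanishes for all $k$ iff $\operatorname{adj}(U)W$ is symmetric. Its antisymmetric part is $\tfrac{\Delta^2}{2}U^{-1}\bigl(WU-(WU)^T\bigr)U^{-1}$. That is a congruence of the antisymmetric part of $WU$, not equal to it, so the condition is exactly $WU=(WU)^T$.

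With your literal $W^T$ placement the condition would instead be $UW=(UW)^T$. The given $a=uc+sd$, $b=sc+wd$ violate this: for $s=d=0$ and $c\neq0$ the cubic survives unless $u=1$.

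The metric specialization of the reconstruction is also not pure $g\wedge g$. It has $G^{0123}=\Delta=1$, i.e. the Minkowski area metric plus a totally antisymmetric $\epsilon^{abcd}$ piece. That piece drops out only because $\gamma=\Delta I$ gives $\gamma K-K\gamma^T=0$. Your fallback of computing the vacuum coefficients directly covers this; the identification with $g\wedge g$ as you stated it does not.

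The $\Delta^2$ rescaling of $L$ affects neither the leading nor the cubic coefficient, so those two claims survive it.
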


\subsection{Parity-invariant extension to the full local constitutive sector}
\label{sec:full-parity-polynomial}

The restricted six-variable polynomial of
\cref{thm:restricted-sssw-polynomial} is a meridional sectoral diagnostic
associated with the parity-even variables.  It should be read together with the
full parity-invariant construction below: a complete local parity-reduced
characteristic analysis also requires the parity-odd block \(G_{--}\).

\begin{theorem}[Full local parity-invariant Tamm--Rubilar polynomial]
\label{thm:complete-parity-polynomial}
\label{thm:full-parity-tr}
Let the reflection \(P_\phi:\phi\mapsto-\phi\) act on the adapted
orthonormal tetrad by
\[
e_3\mapsto -e_3,\qquad e_0,e_1,e_2\mapsto e_0,e_1,e_2 .
\]
Then the bivector space decomposes as
\[
V_+=\operatorname{span}\{[01],[02],[12]\},\qquad
V_-=\operatorname{span}\{[03],[13],[23]\}.
\]
If the local constitutive area metric inherits this reflection symmetry, then
in the parity basis
\[
{\cal B}_{\rm par}=([01],[02],[12]\mid[03],[13],[23])
\]
(used locally away from the polar-axis tetrad singularity, with extension by
continuity on a regular axis) its six-by-six bivector matrix has the block form
\[
{\mathsf G}_{\rm par}=
\begin{pmatrix}G_{++}&0\\0&G_{--}\end{pmatrix},
\]
where \(G_{++}\) and \(G_{--}\) are independent symmetric \(3\times3\)
blocks.  Here \({\mathsf G}_{IJ}\) denotes the raw bivector component
\(G^{ab\,cd}\), with antisymmetry in each index pair and pair-exchange
symmetry.  For signature \((-+++)\) the metric vacuum is
\[
{\mathsf G}_{\rm vac}=
\diag(-1,-1,1\,\mid\,-1,1,1)
=
\eta_{++}\oplus\eta_{--}.
\]
Thus the pair-symmetric parity-invariant local constitutive sector contains
twelve independent algebraic variables.

Let \(G^{abcd}_{\rm par}\) be the corresponding area-metric tensor and let
\[
{\cal G}^{abcd}[G]=-
\frac{1}{24}\epsilon_{mnpq}\epsilon_{rstu}
G^{mnr(a}G^{b|ps|c}G^{d)qtu}
\]
be the Tamm--Rubilar tensor density.  For \(q_a=(\omega,x,y,z)\), define
\[
\widetilde P_{\rm par}(q)=
{\cal G}^{abcd}[G_{\rm par}]q_aq_bq_cq_d .
\]
If \(\Lambda_{\rm par}:={\cal G}^{0000}[G_{\rm par}]\neq0\), then the exact
monic characteristic polynomial is
\[
P_{\rm par}=\frac{\widetilde P_{\rm par}}{\Lambda_{\rm par}}
=\omega^4+C_3\omega^3+C_2\omega^2+C_1\omega+C_0,
\]
with
\[
C_3=\frac{4{\cal G}^{000i}k_i}{\Lambda_{\rm par}},\qquad
C_2=\frac{6{\cal G}^{00ij}k_ik_j}{\Lambda_{\rm par}},
\]
\[
C_1=\frac{4{\cal G}^{0ijk}k_ik_jk_k}{\Lambda_{\rm par}},\qquad
C_0=\frac{{\cal G}^{ijkl}k_ik_jk_kk_l}{\Lambda_{\rm par}},
\]
where \(k_i=(x,y,z)\).  Moreover,
\[
C_j(x,y,z)=C_j(x,y,-z),\qquad j=0,1,2,3 .
\]
\end{theorem}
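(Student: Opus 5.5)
The plan has four steps: the block decomposition, the vacuum normalization and counting, the monic expansion, and the parity of the coefficients.

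\emph{Block form.} Let \(R=\diag(1,1,1,-1)\) be the action of \(P_\phi\) on the tetrad. The induced action on bivectors is \([ab]\mapsto R_a{}^{a}R_b{}^{b}[ab]\) (no sum), with eigenvalue \(+1\) when neither index is \(3\) and \(-1\) when exactly one index is \(3\). This gives \(V_\pm\) as stated. Invariance of \(G\) means \(G^{abcd}=R^a{}_{a'}R^b{}_{b'}R^c{}_{c'}R^d{}_{d'}G^{a'b'c'd'}\). A component \(G^{IJ}\) with \(I\in V_+\) and \(J\in V_-\) contains an odd number of \(3\)-indices, so it equals its own negative and vanishes. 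Hence \(G_{+-}=0\).

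\emph{Vacuum and counting.} Pair symmetry makes each diagonal block symmetric, giving \(6+6=12\) free variables. The vacuum values come from \(G^{abcd}=\eta^{ac}\eta^{bd}-\eta^{ad}\eta^{bc}\) with \(\eta=\diag(-1,1,1,1)\). This yields \(-1\) on \([0i]\) and \(+1\) on \([ij]\), which reproduces the stated ordering \(\diag(-1,-1,1\mid-1,1,1)\). The axis caveat only concerns the smoothness of the tetrad, so the block form extends by continuity.

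\emph{Monic expansion.} Expand \(\widetilde P_{\rm par}=\mathcal G^{abcd}q_aq_bq_cq_d\) with \(q_0=\omega\), using total symmetry of \(\mathcal G^{abcd}\), which is manifest from the symmetrization \((a\ldots d)\) in its definition. Grouping terms by the number of \(0\)-indices gives the multinomial coefficients \(1,4,6,4,1\). Dividing by \(\Lambda_{\rm par}\neq0\) gives the stated \(C_j\). Since multiplication by a nonzero scalar leaves the zero set unchanged, this is the characteristic polynomial.

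\emph{Even in \(z\); the main step.} I would argue by covariance rather than by expanding the twelve-variable quartic. The Tamm--Rubilar density is built from two Levi-Civita symbols and three copies of \(G\). Under a linear change of frame \(L\), it transforms as a tensor of contravariant rank four times \((\det L)^{2}\); this weight comes from the two \(\epsilon\)'s. For \(L=R\) we have \(\det R=-1\), so \((\det R)^2=1\) and the sign drops out. Because \(G_{\rm par}\) is \(R\)-invariant, we get \(\mathcal G^{abcd}=R^a{}_{a'}\cdots R^d{}_{d'}\mathcal G^{a'b'c'd'}\). Therefore \(\widetilde P_{\rm par}(\omega,x,y,z)=\widetilde P_{\rm par}(\omega,x,y,-z)\), and \(\Lambda_{\rm par}\), which has no \(3\)-index, is unchanged. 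Comparing powers of \(\omega\) then gives \(C_j(x,y,z)=C_j(x,y,-z)\).

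The delicate point is the weight bookkeeping. The covariance identity has to be checked directly from the \(\epsilon\epsilon GGG\) formula, since each \(\epsilon\) picks up a factor \(\det R\) and the symmetrization commutes with \(R\). An equivalent component argument is also available: every nonzero component of \(\mathcal G^{abcd}\) with an odd number of \(3\)-indices would need an odd total count of \(3\)'s among the \(G\)'s and \(\epsilon\)'s, and that is impossible. Both \(\epsilon\)'s contain exactly one \(3\) each, and each invariant \(G\) contributes an even number of \(3\)'s. So the total number of \(3\)-slots is even, and the free indices must carry an even number of \(3\)'s. This parity count gives a second, explicit derivation.
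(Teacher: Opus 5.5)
Your proposal is correct and follows the paper's proof. The steps match: the block form from reflection invariance on bivectors, the multinomial expansion by number of temporal indices followed by division by $\Lambda_{\rm par}$, and evenness in $z$ from invariance of the Tamm--Rubilar tensor together with uniqueness of the $\omega$-expansion. Your explicit $(\det R)^2=1$ weight check and the count of $3$-indices make rigorous a step the paper asserts in one line, and your direct computation of the vacuum entries from $\eta^{ac}\eta^{bd}-\eta^{ad}\eta^{bc}$ checks something the paper only states.
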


\begin{proof}
The reflection \(P_\phi\) acts as \(+1\) on the bivectors \([01]\),
\([02]\), and \([12]\), and as \(-1\) on the bivectors \([03]\), \([13]\),
and \([23]\).  In the parity basis its representation is
\[
R_{\rm par}=\operatorname{diag}(+1,+1,+1,-1,-1,-1).
\]
A constitutive matrix inheriting the reflection symmetry satisfies
\(R_{\rm par}^{T}{\mathsf G}R_{\rm par}={\mathsf G}\).  Writing
\[
{\mathsf G}=\begin{pmatrix}G_{++}&G_{+-}\\G_{-+}&G_{--}\end{pmatrix},
\]
this condition gives \(G_{+-}=-G_{+-}\) and \(G_{-+}=-G_{-+}\), and
therefore \(G_{+-}=G_{-+}=0\).  The pair-exchange symmetry of the area
metric makes \(G_{++}\) and \(G_{--}\) symmetric \(3\times3\) matrices.
Hence the full local parity-invariant sector has twelve independent
algebraic variables.

For any area metric \(G^{abcd}\), the Tamm--Rubilar tensor density is cubic
in \(G\) and symmetric in its four free indices.  Therefore
\(\widetilde P_{\rm par}(q)\) is a homogeneous quartic polynomial in
\(q_a=(\omega,x,y,z)\).  Expanding the symmetric quartic according to the
number of temporal indices gives
\[
\widetilde P_{\rm par}=
{\cal G}^{0000}\omega^4
+4{\cal G}^{000i}\omega^3k_i
+6{\cal G}^{00ij}\omega^2k_ik_j
+4{\cal G}^{0ijk}\omega k_ik_jk_k
+{\cal G}^{ijkl}k_ik_jk_kk_l .
\]
If \(\Lambda_{\rm par}={\cal G}^{0000}\neq0\), division by
\(\Lambda_{\rm par}\) does not change the characteristic set.  This gives the
monic polynomial and the displayed coefficient formulas.

Finally, since \(G_{\rm par}\) is invariant under \(P_\phi\), the
Tamm--Rubilar tensor constructed cubically from it is also invariant.  The
transformed covector has \((\omega,x,y,z)\mapsto(\omega,x,y,-z)\).  Therefore
\(P_{\rm par}(\omega,x,y,z)=P_{\rm par}(\omega,x,y,-z)\).  By uniqueness of
the polynomial expansion in powers of \(\omega\), each coefficient \(C_j\) is
individually even in \(z\).
\end{proof}

\begin{remark}[SSSW-frame cubic cancellation]
\label{cor:sssw-cubic}
The parity theorem alone does not require \(C_3\) to vanish.  In the explicitly
reconstructed SSSW-frame subclass used in the supplementary implementation, the
magneto-electric trace condition imposed by the SSSW frame removes the
\(\omega^3\) Tamm--Rubilar coefficient.  This is verified by direct symbolic
Tamm--Rubilar contraction from the reconstructed tensor.  For that subclass, the
monic polynomial reduces to
\[
P_{\rm par}=\omega^4+C_2\omega^2+C_1\omega+C_0,
\]
with spatial coefficients taking the exact structural form
\[
C_2(x,y,z)=\alpha_0(x,y)+\alpha_2(x,y)z^2,
\]
\[
C_1(x,y,z)=\beta_0(x,y)+\beta_2(x,y)z^2,
\]
\[
C_0(x,y,z)=\gamma_0(x,y)+\gamma_2(x,y)z^2+\gamma_4(x,y)z^4.
\]
\end{remark}

The theorem uses only the reflection isometry and the Tamm--Rubilar tensor; it
is independent of any area-metric field equation or closure prescription.
SSSW-frame restrictions may remove \(C_3\), but this is not a consequence of
parity alone.  The full parity generator retained in the supplementary
implementation allows \(C_3\ne0\) and evaluates the twelve-variable polynomial
directly from the invariant formula.  The printed six-variable meridional
polynomial is the complementary analytic bridge: it keeps the
connection-motivated meridional block and supplies compact coefficient checks
for pointwise margin calculations.

The compact Tamm--Rubilar derivation is given in \cref{app:quartic-derivation}.
The fully expanded \(C_2,C_1,C_0\) coefficients are kept in the supplementary
symbolic framework rather than printed in the article; \cref{app:expanded-coefficients}
records the invariant sign checks used to audit that expansion.

\begin{table}[t]
\centering
\caption{Roles of the printed meridional polynomial and the full parity-invariant construction.}
\label{tab:psi-versus-parity}
\begin{tabular}{@{}lll@{}}
\toprule
Feature & Meridional \(P_\Psi\) & Full parity \(P_{\rm par}\) \\
\midrule
Algebraic data & six SSSW-frame variables & \(G_{++}\oplus G_{--}\) \\
Cubic term & absent by SSSW-frame condition & generally present \\
Transverse block & not explicit & included invariantly \\
Use & analytic sectoral benchmark & supplied-tensor roots \\
\bottomrule
\end{tabular}
\end{table}

Thus \(P_\Psi\) should not be used for a supplied tensor whose magneto-electric
trace data place it outside the reconstructed SSSW-frame subclass.  In that
case the appropriate local object is the full invariant polynomial
\(P_{\rm par}\), generally with \(C_3\ne0\).

In the restricted six-variable polynomial, every term in $C_1$ contains at least one of $a,b,c,d$, and hence at least one mixed variable.  Thus, within this meridional SSSW projection, the mixed sector is the exclusive source of odd-in-frequency root asymmetry.  Conversely, $C_2$ and $C_0$ are the actual algebraic objects that must be checked for sign and spatial definiteness; the signs of individual $\phi_A$ are not by themselves acceptance criteria.

To first order about the metric vacuum,
\begin{align}
P_\Psi={}&(\omega^2-x^2-y^2-z^2)^2 \\
&+(x^2+y^2+z^2-\omega^2)
\Bigl[(2\phi_1+\phi_{12}+\phi_4)x^2+\sqrt2\phi_2xy \\
&\hspace{3.1cm}+ (\phi_1+\phi_{12}+2\phi_4)y^2+(\phi_1+\phi_4)z^2\Bigr]+O(\phi^2).
\end{align}
In particular, $\phi_{16}$ and $\phi_{17}$ do not produce a first-order odd-in-frequency deformation about the exact metric vacuum unless coupled to a non-metric symmetric lift.  This is a useful guard against overinterpreting a single large mixed component.

\paragraph{Implementation checks.}
Generic full-parity supplied tensors, including cases with both \(G_{++}\) and \(G_{--}\) populated and \(C_3\neq0\), are evaluated by the supplementary symbolic implementation from the invariant formula above.  The printed six-variable polynomial remains the analytic bridge and compact audit case; the full generator is the object to use for tensors outside the reconstructed SSSW-frame subclass.

\section{Local control window and DH validity domain}
\label{sec:admissibility}

For each unit spatial covector \(k=(x,y,z)\), write
\begin{equation}
        p_{\Psi,k}(\Omega)=P_\Psi(\Omega,k),\qquad |k|_\delta=1,
\end{equation}
and compare with the metric double root \(p_{0,k}=(\Omega^2-1)^2\).  The
near-metric size of the reconstructed SSSW data is
\begin{equation}
 \epsilon_{\WB}:=\max\{\|U-I\|_2,\|L-I\|_2,\|W\|_2\}.
 \label{eq:wb-size}
\end{equation}
The reference value \(\epsilon_*=0.1\) is a diagnostic guardrail, not a universal
physical constant.  For a generic weak-birefringent tensor it keeps finite-amplitude
terms of schematic size \(O(\epsilon_{\WB}^2)\) at roughly the ten-percent level
relative to first-order shifts.  It would be misleading to claim that
\(\epsilon_{\WB}=0.1\) is where quadratic terms rival the linear terms; rivaling
occurs parametrically at \(\epsilon_{\WB}=O(1)\).  For DH input the stronger EFT
conditions below, not \(\epsilon_*\), determine the physical range of the
low-frequency surrogate.

Define the symmetric positivity, coefficient, branch, and cone-displacement
margins by
\begin{align}
 m_{\rm sym}&=\min\left\{\mu_{\min}\begin{pmatrix}u&s\\s&w\end{pmatrix},\ell,uw-s^2\right\},
        \label{eq:msym}\\
 m_2&=\inf_{|k|=1}\left[-\frac12 C_2(k)\right],
 &m_0&=\inf_{|k|=1}C_0(k),\label{eq:m2m0}
\end{align}
with real roots ordered, when possible, as
\begin{equation}
 \Omega_-^{(2)}\le \Omega_-^{(1)}<0<\Omega_+^{(1)}\le\Omega_+^{(2)} .
\end{equation}
Then
\begin{equation}
 m_E=\inf_{|k|=1}\min\{\Omega_+^{(1)}(k),-\Omega_-^{(1)}(k)\},
\end{equation}
\begin{equation}
 d_{\rm cone}=\sup_{|k|=1}\max\left\{
 \max_{\sigma=1,2}|\Omega_+^{(\sigma)}(k)-1|,
 \max_{\sigma=1,2}|\Omega_-^{(\sigma)}(k)+1|
 \right\} .
\end{equation}
For \(0<\epsilon\le0.2\), the local acceptance window is
\begin{equation}
\begin{aligned}
{\cal W}^{\rm loc}_\epsilon:=\{\Psi:\;&
 p_{\Psi,k}(\Omega) \text{ has four real roots for all } |k|=1,\\
&\text{with ADM-normal positive/negative separation,}\quad
\epsilon_{\WB}\le\epsilon,\\
&m_{\rm sym},m_2,m_0,m_E\ge 1-\epsilon,
\qquad d_{\rm cone}\le\epsilon\} .
\end{aligned}
\label{eq:admissible-set-epsilon}
\end{equation}
Membership in \({\cal W}^{\rm loc}_\epsilon\) is used as a local weak-birefringence
acceptance window for a supplied nondispersive tensor.  It controls the root data
that a ray integrator actually consumes, while global hyperbolicity, stable
causality, dual-cone regularity, and dynamical well-posedness remain separate
analyses of the underlying theory.

\subsection{Low-frequency Fresnel roots versus QED wavefront characteristics}
\label{subsec:low-frequency-wavefronts}

The tests above concern a supplied nondispersive Tamm--Rubilar polynomial.  For a
genuine nondispersive medium, that polynomial is the local Fresnel polynomial.
For DH input it is only the low-frequency algebraic curvature surrogate.  The
full QED problem contains derivative and nonlocal vacuum-polarization terms, and
its high-frequency wavefront limit is governed by the frequency-dependent
polarization operator, not by the local quartic used here
\cite{HollowoodShore2007,HollowoodShore2008,DebDesaiGhosh2026}.

This limited status is also the reason the surrogate is useful.  In a numerical
pipeline one often needs to decide, at millions of local evaluation points,
whether a supplied effective tensor has real and continuously labelable branches
before performing an expensive Hamiltonian integration.  The Tamm--Rubilar screen
is therefore a preprocessing filter: accepted points are branch-stable inputs to
a low-frequency ray calculation, while rejected points are flagged as
``nondispersive surrogate not validated'' and should be treated with the
frequency-dependent polarization operator rather than forced through a local
quartic root finder.

\subsection{Dispersive truncation bound for numerical use}
\label{subsec:qed-dispersion-error}

Let \(\omega_{\rm loc}\) be the local photon frequency,
\({\cal R}(x)=\max_{\hat a\hat b\hat c\hat d}|R_{\hat a\hat b\hat c\hat d}|\), and
let \(L_R\) be a local curvature-variation scale.  The nondispersive DH screen is
controlled only when
\begin{equation}
\epsilon_{\rm curv}=\lambda_e^2{\cal R}\ll1,
\quad
\epsilon_{\rm grad}=\frac{\lambda_e}{L_R}\ll1,
\quad
\epsilon_{\rm eik}=\frac{1}{\omega_{\rm loc}L_R}\ll1,
\quad
\epsilon_{\rm freq}=\frac{\omega_{\rm loc}}{m_e}\ll1 .
\label{eq:qed-validity-parameters}
\end{equation}
The leading DH cone shift scales as
\begin{equation}
        \delta_{\rm DH}=O\!\left(\frac{\alpha}{\pi}\lambda_e^2{\cal R}\right),
\end{equation}
while the omitted dispersive, gradient, and higher-curvature contributions have
the schematic relative size
\begin{equation}
 \eta_{\rm EFT}:=
 \left(\frac{\omega_{\rm loc}}{m_e}\right)^2
 +\frac{\lambda_e}{L_R}
 +\lambda_e^2{\cal R},
 \qquad
 \frac{|\Delta\delta_{\rm DH}|}{|\delta_{\rm DH}|}
 \lesssim C_{\rm disp}\eta_{\rm EFT} .
\label{eq:qed-relative-error-bound}
\end{equation}
Thus omitted terms rival the first-order cone shift when
\(C_{\rm disp}\eta_{\rm EFT}=O(1)\).  A conservative numerical use of the local
DH surrogate should instead require \(C_{\rm disp}\eta_{\rm EFT}\lesssim0.1\), in
parallel with the algebraic root window \({\cal W}^{\rm loc}_{\epsilon_*}\).  The
Hollowood--Shore dispersive parameter may be written schematically as
\begin{equation}
        \widehat\omega^{\,2}\sim\frac{\omega_{\rm loc}^2{\cal R}}{m_e^4},
\end{equation}
and when this parameter or \(\omega_{\rm loc}/m_e\) ceases to be small, a code
should return ``nondispersive surrogate not validated'' rather than infer a
microscopic causality failure.

\begin{figure}[t]
\centering
\begin{tikzpicture}[>=Stealth]
\scriptsize
\tikzset{box/.style={draw,rounded corners,align=center,text width=2.45cm,minimum height=0.72cm}}
\node[box] (sup) at (-4.5,2.0) {supplied local\\tensor $G^{abcd}(x)$};
\node[box] (tr)  at (0,2.0) {Tamm--Rubilar\\roots and margins};
\node[box] (eft) at (4.5,2.0) {QED/EFT\\status marker};
\node[box] (wg)  at (0,0.55) {$\mathcal W^{\rm loc}_{\epsilon_*}$ gate\\real roots, branch gap\\$\epsilon_*=0.1$ guardrail};
\node[box] (eg)  at (4.5,0.55) {EFT gate\\$C_{\rm disp}\eta_{\rm EFT}\le0.1$\\low frequency/eikonal};
\node[box] (acc) at (0,-1.45) {accepted branch\\Hamiltonian transport};
\node[box,dashed] (pol) at (4.5,-1.45) {replace by dispersive\\polarization determinant};
\node[align=center,text width=3.5cm] (fail) at (-4.5,-1.45) {if either gate fails:\\return local-surrogate\\status flag, not QED acausality};
\draw[->,thick] (sup)--(tr);
\draw[->,thick] (tr)--(eft);
\draw[->,thick] (tr)--(wg);
\draw[->,thick] (eft)--(eg);
\draw[->,thick] (wg)--(acc);
\draw[->,thick] (eg)--(pol);
\end{tikzpicture}
\caption{Numerical use of the nondispersive surrogate.  The algebraic window
$\mathcal W^{\rm loc}_{\epsilon_*}$ controls root reality and branch stability;
the separate EFT marker controls whether the DH low-frequency approximation is
valid.  Both gates must pass before a polarization ray code treats the local
quartic branches as transport data.}
\label{fig:surrogate-workflow}
\end{figure}

\section{Physical application: exterior Schwarzschild Drummond--Hathrell tensor}
\label{sec:dh}

The Drummond--Hathrell effective action supplies a physically fixed local
constitutive tensor in the low-frequency electron-loop regime
\cite{DrummondHathrell}.  In this section it is used as a real full-parity test
of \cref{thm:complete-parity-polynomial}.  Only the local algebraic
\(O(\alpha R/m_e^2)\) curvature sector is inserted into the Tamm--Rubilar
construction; the higher-derivative DH term belongs to the separate dispersive
QED wavefront problem discussed in \cref{subsec:qed-dispersion-error} and in
\cite{HollowoodShore2007,HollowoodShore2008}.  Thus the roots below are
low-frequency phase-velocity roots of the supplied nondispersive surrogate, not
microscopic high-frequency QED characteristics.

With metric signature \((-+++)\), antisymmetrization
\(X^{a[c}Y^{d]b}=(X^{ac}Y^{db}-X^{ad}Y^{cb})/2\), and constitutive convention
\(H^{ab}=(1/2)\chi^{abcd}F_{cd}\), the local algebraic DH perturbation is
\begin{equation}
 \delta\chi^{abcd}_{\rm DH}
 =-\frac{8}{m_e^2}\left[
 c_RR\,g^{a[c}g^{d]b}
 +\frac12c_{\rm Ric}\left(R^{a[c}g^{d]b}-R^{b[c}g^{d]a}\right)
 +c_{\rm Riem}R^{abcd}\right],
 \label{eq:dh-chi}
\end{equation}
where
\begin{equation}
 c_R=-\frac{1}{144}\frac{\alpha}{\pi},\qquad
 c_{\rm Ric}=\frac{13}{360}\frac{\alpha}{\pi},\qquad
 c_{\rm Riem}=-\frac{1}{360}\frac{\alpha}{\pi} .
 \label{eq:dh-coeffs}
\end{equation}
The sign and normalization dictionary from \(\delta\chi^{abcd}\) to the
restricted meridional SSSW variables, together with the complementary transverse
parity block, is recorded in \cref{app:dictionary}.  The main-text calculation
below uses the full bivector tensor, not only the restricted meridional block.

\paragraph{Exterior Schwarzschild full-parity tensor.}
In the Ricci-flat exterior Schwarzschild geometry, use the adapted orthonormal
basis \((0,1,2,3)=(n,\hat r,\hat\theta,\hat\phi)\) and define
\(\kappa=M/r^3\).  With the curvature convention of \eqref{eq:dh-chi}, the
nonzero independent raised Riemann components are
\[
R^{0101}=-2\kappa,\qquad
R^{0202}=R^{0303}=\kappa,
\]
\[
R^{1212}=R^{1313}=-\kappa,
\qquad
R^{2323}=2\kappa .
\]
Since \(R_{ab}=R=0\), \eqref{eq:dh-chi} becomes
\begin{equation}
 \delta\chi^{abcd}_{\rm DH}
   =\frac{2\alpha}{90\pi}\lambda_e^2 R^{abcd} .
 \label{eq:schw-dh-ricci-flat}
\end{equation}
Writing
\begin{equation}
 \epsilon_{\rm S}:=\frac{\alpha}{45\pi}\lambda_e^2\frac{M}{r^3},
 \label{eq:epsS-def}
\end{equation}
the supplied local tensor has both parity blocks fixed:
\begin{equation}
G^{\rm Schw}_{++}
 =\diag(-1-2\epsilon_{\rm S},\,-1+\epsilon_{\rm S},\,1-\epsilon_{\rm S}),
\qquad
G^{\rm Schw}_{--}
 =\diag(-1+\epsilon_{\rm S},\,1-\epsilon_{\rm S},\,1+2\epsilon_{\rm S}).
 \label{eq:schw-full-blocks}
\end{equation}
Applying the full invariant Tamm--Rubilar contraction gives
\begin{equation}
\begin{aligned}
P^{\rm Schw}_{\rm DH}={}&
\left[\omega^2-x^2-
 \frac{1-\epsilon_{\rm S}}{1+2\epsilon_{\rm S}}(y^2+z^2)\right]\\
&\times
\left[\omega^2-x^2-
 \frac{1+2\epsilon_{\rm S}}{1-\epsilon_{\rm S}}(y^2+z^2)\right].
\end{aligned}
\label{eq:schw-full-parity-polynomial}
\end{equation}
The exact rational dependence in \eqref{eq:schw-full-parity-polynomial} is the
polynomial of the supplied nondispersive surrogate.  The physical DH comparison
is its first-order expansion,
\begin{equation}
 Q_\pm=\omega^2-x^2-
 \left[1\mp3\epsilon_{\rm S}+O(\epsilon_{\rm S}^2)\right](y^2+z^2).
 \label{eq:schw-quadratic-factors}
\end{equation}
Radial propagation has \(y=z=0\) and is unshifted.  For orbital propagation,
\(x=0\), the two positive phase roots are
\begin{equation}
 \Omega_{\rm slow,fast}
 =1\mp\frac{3}{2}\epsilon_{\rm S}+O(\epsilon_{\rm S}^2)
 =1\mp\frac{\alpha}{30\pi}\lambda_e^2\frac{M}{r^3}
  +O\!\left(\lambda_e^4\frac{M^2}{r^6}\right).
 \label{eq:schw-DH-known-limit}
\end{equation}
This is the standard low-frequency Drummond--Hathrell Schwarzschild result:
radial photons are unshifted, while orbital photons split into equal-and-opposite
polarization branches at order \(\alpha\lambda_e^2M/(\pi r^3)\)
\cite{DrummondHathrell,HollowoodShore2007,HollowoodShore2008}.  The original
DH reference is therefore cited at the point where the known limit is used.

The exact factorization also certifies the local root margins.  For
\(0<\epsilon_{\rm S}<1\) and \(|k|=1\), set
\[
 A_{\rm S}=\frac{1-\epsilon_{\rm S}}{1+2\epsilon_{\rm S}},\qquad
 B_{\rm S}=\frac{1+2\epsilon_{\rm S}}{1-\epsilon_{\rm S}}=A_{\rm S}^{-1}.
\]
The roots are exactly
\(\Omega=\pm\sqrt{x^2+A_{\rm S}(y^2+z^2)}\) and
\(\Omega=\pm\sqrt{x^2+B_{\rm S}(y^2+z^2)}\), hence
\[
 m_E^{\rm Schw}=\sqrt{A_{\rm S}},\qquad
 d_{\rm cone}^{\rm Schw}
 =\max\left\{1-\sqrt{A_{\rm S}},\sqrt{B_{\rm S}}-1\right\}.
\]
The extrema occur at tangential directions \(x=0\).  These are closed-form
bounds, not sampled minima.

For any static nonrotating compact-object background with vanishing shift,
\(K_{ij}=0\) gives \(R_{\hat0\hat i}=0\) and
\(R_{\hat0\hat i\hat j\hat k}=0\).  The restricted mixed meridional DH slots
therefore vanish in the adapted tetrad.  Rotating or nonstatic systems should be
handled with the full local pair-symmetric tensor rather than by forcing them
into the restricted six-variable diagnostic.

\section{First-order Ricci-flat screen reduction and a Kerr benchmark application}
\label{sec:ricci-flat-propagation}

The Schwarzschild calculation in \cref{sec:dh} is an exact Tamm--Rubilar
factorization for the supplied nondispersive surrogate.  For propagation
applications, the more useful object is the corresponding first-order branch
Hamiltonian along a ray.  This section gives a Ricci-flat screen reduction and
two applications: a ray-integrated Schwarzschild birefringent delay and a Kerr
weak-lensing retardance calculation.  Throughout this section
\begin{equation}
 \zeta_{\rm DH}:=\frac{\alpha}{45\pi}\lambda_e^2 .
 \label{eq:zeta-DH}
\end{equation}
Only terms through first order in \(\zeta_{\rm DH}R\) are interpreted as
Drummond--Hathrell QED; terms of order \((\zeta_{\rm DH}R)^2\) belong to the
finite-amplitude algebraic surrogate, not to higher-loop QED.

\begin{proposition}[Ricci-flat Drummond--Hathrell screen reduction and branch margins]
\label{thm:ricci-flat-screen}
Let \((M,g)\) be Ricci-flat on an open region \(U\), and insert only the
algebraic low-frequency Drummond--Hathrell curvature term into the local
constitutive tensor.  At a point choose an orthonormal tetrad, write a nonzero
spatial covector as \(k_i=|k|n_i\), \(n_in^i=1\), and set
\(\ell^a=(1,n^i)\).  Let \(s_A^a=(0,s_A^i)\), \(A=1,2\), be an orthonormal screen
basis with \(s_A\cdot n=0\), and define
\begin{equation}
 {\cal K}_{AB}(n):=R_{abcd}\ell^a s_A^b\ell^c s_B^d .
 \label{eq:screen-K}
\end{equation}
If \(\lambda_A(n)\) are the eigenvalues of this real symmetric screen matrix,
then the two physical Drummond--Hathrell branches are, through first order,
\begin{equation}
 H_A(q)=\omega^2-|k|^2-\zeta_{\rm DH}|k|^2\lambda_A(n),
 \qquad A=1,2,
 \label{eq:screen-branches}
\end{equation}
with positive phase roots
\begin{equation}
 \Omega_A(n):=\frac{\omega_A}{|k|}
 =1+\frac{\zeta_{\rm DH}}{2}\lambda_A(n)
 +O\!\left((\zeta_{\rm DH}R)^2\right).
 \label{eq:screen-roots}
\end{equation}
Equivalently, after diagonalizing the first-order screen operator,
\begin{equation}
 \det M=\prod_{A=1}^{2}
 \left[\omega^2-|k|^2-\zeta_{\rm DH}|k|^2\lambda_A(n)\right]
 +O\!\left((\zeta_{\rm DH}R)^2|k|^4\right).
 \label{eq:screen-factorization}
\end{equation}
Equation \eqref{eq:screen-factorization} is a determinant of the diagonalized
first-order polarization operator.  It should not be read as a linearly
truncated quartic polynomial: in Ricci-flat backgrounds
\(\lambda_1+\lambda_2=0\), so the term linear in \(\zeta_{\rm DH}\) cancels in
the quartic even though the double metric root has split at first order.  This
cancellation is the elementary identity
\(\prod_A[X-\zeta_{\rm DH}\lambda_A]=X^2-
\zeta_{\rm DH}^2\lambda_+^2\) for a trace-free two-by-two screen matrix; the
useful content of the proposition is the covariant screen operator, the branch
Hamiltonians, and the explicit local root-margin bounds.

If
\begin{equation}
 \zeta_{\rm DH}\Lambda_U<1,
 \qquad
 \Lambda_U:=\sup_{x\in U}\sup_{|n|=1}\|{\cal K}_x(n)\|_2,
 \label{eq:screen-margin-condition}
\end{equation}
then every branch Hamiltonian in \eqref{eq:screen-branches} has real roots with
the same ADM-normal positive/negative orientation throughout \(U\), and the local
branch margins obey
\begin{equation}
 m_E\ge \sqrt{1-\zeta_{\rm DH}\Lambda_U},
 \qquad
 d_{\rm cone}\le
 \max\!\left\{\sqrt{1+\zeta_{\rm DH}\Lambda_U}-1,
 1-\sqrt{1-\zeta_{\rm DH}\Lambda_U}\right\}.
 \label{eq:screen-margin-bounds}
\end{equation}
This is the local first-order branch-realness and root-margin certificate needed
before the accepted branches are passed to a transport calculation; the full
frequency-dependent QED operator is handled by the separate EFT status marker of
\cref{subsec:qed-dispersion-error}.
\end{proposition}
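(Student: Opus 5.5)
The plan is to work directly with Maxwell's equations for the perturbed constitutive law rather than with the full Tamm--Rubilar quartic. In a Ricci-flat region, \eqref{eq:dh-chi} reduces exactly as in \eqref{eq:schw-dh-ricci-flat} to \(\delta\chi^{abcd}=2\zeta_{\rm DH}R^{abcd}\), since the Ricci and scalar terms drop out. Inserting the eikonal ansatz \(F_{ab}=q_a\epsilon_b-q_b\epsilon_a\) into \(\nabla_a H^{ab}=0\) with \(H^{ab}=\tfrac12\chi^{abcd}F_{cd}\) gives the polarization operator
\[
M^{b}{}_{d}\epsilon^d=\bigl(q^2\delta^b_d-q^bq_d\bigr)\epsilon^d-2\zeta_{\rm DH}R^{ab}{}_{cd}\,q_aq^c\epsilon^d .
\]
The gauge direction \(\epsilon\propto q\) and the longitudinal component are removed by contracting \(M\) onto the two-dimensional transverse space. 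At zeroth order, \(q\) is null, \(q_a=|k|(\mp1,n)\), so \(q^a\propto\ell^a\) up to orientation. The transverse space modulo \(q\) is then spanned by the screen vectors \(s_A\), and I will use this basis for the reduced operator.

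The key step is the degenerate first-order perturbation of the double root. Projecting \(M\) onto \(s_A\) gives the \(2\times2\) matrix
\[
(q^2)\delta_{AB}-2\zeta_{\rm DH}R_{acbd}q^aq^cs_A^bs_B^d .
\]
With \(q^a=|k|\ell^a+O(\zeta)\), this is \(|k|^2\bigl[(\omega^2-|k|^2)/|k|^2\,\delta_{AB}-\zeta_{\rm DH}{\cal K}_{AB}\bigr]\). The normalization factor (the \(2\) against the \(\tfrac12\)) has to be tracked carefully so that the coefficient comes out as \(\zeta_{\rm DH}\lambda_A\). I will check it against the Schwarzschild result \eqref{eq:schw-quadratic-factors}: for \(n=\hat r\) one should find \({\cal K}=0\), and for a tangential \(n\) one should find \(\lambda=\pm3\kappa\), reproducing the \(1\mp3\epsilon_{\rm S}\) splitting.

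Corrections from using the exact rather than the null \(q\), and from the longitudinal mixing, enter only at \(O((\zeta_{\rm DH}R)^2)\). I will argue this via a Schur-complement or Feshbach reduction of the \(4\times4\) operator onto the screen block. Symmetry of \({\cal K}\) follows from the pair symmetry of Riemann, so an orthogonal rotation of the \(s_A\) diagonalizes it. This yields \eqref{eq:screen-branches} and \eqref{eq:screen-factorization}, and taking square roots gives \eqref{eq:screen-roots}. Ricci-flatness gives \(\tr{\cal K}=R_{ab}\ell^a\ell^b+R_{abcd}\ell^a\ell^c n^bn^d\); the first term vanishes, and the second vanishes by antisymmetry because \(\ell\) and \(n\) differ only in their time component. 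Hence \(\lambda_1+\lambda_2=0\), which explains the cancellation remark.

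For the margins, each branch root is exactly \(\Omega_A=\sqrt{1+\zeta_{\rm DH}\lambda_A}\), with \(|\lambda_A|\le\|{\cal K}\|_2\le\Lambda_U\). Under \eqref{eq:screen-margin-condition} the radicand is positive, so the roots are real and nonzero. Their signs therefore cannot change anywhere on \(U\), giving continuous ADM orientation. Monotonicity of the square root then gives \eqref{eq:screen-margin-bounds}.

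I expect the main obstacle to be justifying the reduction of the full \(4\times4\) operator to the screen block with an \(O(\zeta^2)\) error, uniformly in \(n\). This is where the first-order claim is really made. I would handle it by a gauge choice \(\epsilon\cdot u=0\) and explicit block elimination of the longitudinal component.
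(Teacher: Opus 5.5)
Your route matches the paper's. Both project \(q_aH^{ab}=0\) onto the screen, get symmetry of \({\cal K}\) from pair symmetry, diagonalize orthogonally, and obtain the margins from \(|\lambda_A|\le\|{\cal K}\|_2\le\Lambda_U\) and monotonicity of \(\sqrt{1+\zeta_{\rm DH}\lambda}\). Your temporal-gauge Schur-complement plan improves on the paper's one-line ``terms drop out'' step, and it works. With \(\epsilon_0=0\) the longitudinal pivot is \(-\omega^2+O(\zeta_{\rm DH}R|k|^2)\), which stays away from zero near the cone, so eliminating it costs only \(O((\zeta_{\rm DH}R)^2|k|^2)\), uniformly in \(n\). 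This estimate, and the replacement \(q\to|k|\ell\), hold only near the characteristic sheet \(\omega^2-|k|^2=O(\zeta_{\rm DH}R|k|^2)\). That is how the remainders in the statement should be read.

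\textbf{The trace identity fails as written.} Completeness in the orthonormal frame is \(\sum_A s_A^bs_A^d=g^{bd}+u^bu^d-n^bn^d\) with \(u=e_0\). Hence \(\tr{\cal K}=R_{ab}\ell^a\ell^b+R_{abcd}\ell^au^b\ell^cu^d-R_{abcd}\ell^an^b\ell^cn^d\).
\begin{itemize}
\item Your formula omits the \(u\)-term.
\item The term you keep does not vanish. Since \(\ell=u+n\), \(R_{abcd}\ell^an^b\ell^cn^d=R_{abcd}u^an^bu^cn^d=R_{0n0n}\), the electric tidal component along \(n\).
\item For radial \(n\) in Schwarzschild this is \(-2M/r^3\). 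Your formula would then give \(\tr{\cal K}(\hat r)\neq0\), contradicting \({\cal K}(\hat r)=0\).
\end{itemize}
The correct argument is that the two extra terms cancel: \(R_{abcd}\ell^au^b\ell^cu^d=R_{abcd}n^au^bn^cu^d=R_{abcd}u^an^bu^cn^d=R_{abcd}\ell^an^b\ell^cn^d\). Equivalently, as in the paper, write \(u^bu^d-n^bn^d=\ell^{(b}N^{d)}\) with \(N=u-n\) and \(\ell\cdot N=-2\). Each resulting term then has two copies of \(\ell\) in one antisymmetric pair.

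\textbf{Normalization and sign.} The perturbation is \(\delta\chi^{abcd}_{\rm DH}=\zeta_{\rm DH}R^{abcd}\), not \(2\zeta_{\rm DH}R^{abcd}\), since \(2\alpha/(90\pi)=\alpha/(45\pi)\), i.e.\ \(-8c_{\rm Riem}\lambda_e^2=\zeta_{\rm DH}\). There is no compensating \(\tfrac12\). The ansatz \(F_{cd}=q_c\epsilon_d-q_d\epsilon_c\) gives \(q_aH^{ab}=\chi^{abcd}q_aq_c\epsilon_d\). Its screen block is \((|k|^2-\omega^2)\delta_{AB}+\zeta_{\rm DH}|k|^2{\cal K}_{AB}=-M_{AB}\), with the curvature term entering with a plus sign. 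Carried through, your \(2\zeta_{\rm DH}\) would give \(1\pm3\epsilon_{\rm S}\) in Schwarzschild, which your planned check would catch. Also, \(R_{acbd}q^aq^cs_A^bs_B^d\equiv0\); you mean \(R_{abcd}q^as_A^bq^cs_B^d\).

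\textbf{Orientation of the two sheets.} From \(q_a=|k|(\mp1,n)\) you get \(q^a=|k|(\pm1,n)\). On the lower sign this is proportional to \((1,-n)\), not to \(\ell\). That sheet is governed by \({\cal K}(-n)\), which differs from \({\cal K}(n)\) by the \(R_{0AiB}\) terms in non-static frames (e.g.\ the \(2\hat k^iR_{0AiB}\) term in the Kerr Born projection). The paper's statement also suppresses this. The margin bounds survive because \(\Lambda_U\) is a supremum over all \(n\).
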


\begin{proof}
For an eikonal field \(F_{ab}=2q_{[a}a_{b]}\), the algebraic Maxwell equation
\(q_aH^{ab}=0\), with \(q\cdot a=0\), projects onto the two physical screen
polarizations.  In Ricci-flat spacetime the DH constitutive perturbation is
\(\delta\chi^{abcd}_{\rm DH}=\zeta_{\rm DH}R^{abcd}\), so the screen-projected
operator takes the form
\begin{equation}
 M_{AB}=\left(\omega^2-|k|^2\right)\delta_{AB}
        -\zeta_{\rm DH}|k|^2{\cal K}_{AB}(n)
        +O\!\left((\zeta_{\rm DH}R)^2|k|^2\right).
 \label{eq:screen-matrix-proof}
\end{equation}
Terms proportional to \(q_A\) or to the gauge direction drop out of the screen
projection, and the curvature term may be evaluated on the unperturbed null
vector \(\ell^a\) because the root displacement is already first order.  The
Riemann symmetries make \({\cal K}_{AB}\) symmetric, so an orthogonal screen
rotation diagonalizes \eqref{eq:screen-matrix-proof}.  The diagonal entries give
\eqref{eq:screen-branches}; solving them perturbatively gives
\eqref{eq:screen-roots}; and their product gives \eqref{eq:screen-factorization}.
The margin bounds follow immediately from
\(|\lambda_A|\le\Lambda_U\).  The trace-free Ricci-flat identity follows by
contracting \({\cal K}_{AB}\) over the screen with
\(g^{ab}+\ell^{(a}N^{b)}\), where \(N\) is the auxiliary null vector satisfying
\(\ell\cdot N=-2\), and using \(R_{ab}=0\).
\end{proof}

For Ricci-flat backgrounds the trace of \({\cal K}_{AB}\) over the two-dimensional
screen vanishes, so the two eigenvalues are \(\lambda_\pm=\pm\Lambda(n)\).  The
low-frequency branch Hamiltonians may therefore be written, to first order in
the local orthonormal tetrad, as
\begin{equation}
 H_\pm(x,q)=\omega^2-|k|^2
 -\zeta_{\rm DH}|k|^2\lambda_\pm(x,n).
 \label{eq:branch-Hamiltonians}
\end{equation}
The corresponding ray calculation is standard Hamiltonian transport for each
accepted branch.  The observable used below is the arrival-time difference at
fixed Killing frequency in a stationary, asymptotically flat lensing geometry.
For the nondispersive first-order surrogate the phase and group velocities agree
to this order.  Since \(\omega_A=|k|[1+\zeta_{\rm DH}\lambda_A/2]\), a branch with
eigenvalue \(\lambda_A\) has
\begin{equation}
        v_{g,A}=1+\frac{\zeta_{\rm DH}}{2}\lambda_A+O((\zeta_{\rm DH}R)^2),
        \qquad
        t_A=\int_\gamma\left[1-\frac{\zeta_{\rm DH}}{2}\lambda_A(x,n)\right]\dd l .
        \label{eq:branch-time-from-Hamiltonian}
\end{equation}
Thus the branch-labelled split of branch \(B\) relative to branch \(A\) is
\begin{equation}
        \Delta t_{AB}:=t_B-t_A=
        \frac{\zeta_{\rm DH}}{2}\int_\gamma
        [\lambda_A(x,n)-\lambda_B(x,n)]\,\dd l
        +O((\zeta_{\rm DH}R)^2),
        \label{eq:signed-time-delay-general}
\end{equation}
and the corresponding positive unlabeled magnitude is
\begin{equation}
 |\Delta t_{12}|=
 \frac{\zeta_{\rm DH}}{2}\int_\gamma
 |\lambda_1(x,n)-\lambda_2(x,n)|\,\dd l
 =\zeta_{\rm DH}\int_\gamma \Lambda(x,n)\,\dd l .
 \label{eq:time-delay-general}
\end{equation}
For a monochromatic beam the same quantity is equivalently a phase retardance,
\(\delta_{AB}=\omega_\infty\Delta t_{AB}\), acting on a specified input Jones or
Stokes vector.  It is not polarization production from an unpolarized beam and
must be used together with the QED validity conditions in
\cref{subsec:qed-dispersion-error}.

\paragraph{Check against Schwarzschild.}
For Schwarzschild, \(R_{0i0j}=\diag(-2\kappa,\kappa,\kappa)\),
\(\kappa=M/r^3\), in the adapted orthonormal frame of \cref{sec:dh}.  Radial
propagation gives \({\cal K}=0\).  Tangential propagation gives
\(\lambda_\pm=\pm3\kappa\), so \eqref{eq:screen-roots} gives
\(\Omega_\pm=1\pm(3/2)\epsilon_{\rm S}\), which is exactly the first-order form
of \eqref{eq:schw-DH-known-limit}, up to the labeling of the slow and fast
polarizations.

\paragraph{Weak-lensing time-delay split in Schwarzschild.}
The local factorization can now be turned into a propagation observable.  In the
weak-deflection regime \(b\gg M\), approximate the unperturbed metric ray by a
straight line with impact parameter \(b\), affine Euclidean coordinate \(l\), and
\(r(l)=\sqrt{b^2+l^2}\).  The angle between the ray direction and the radial unit
vector satisfies \(\sin\psi=b/r\).  The Schwarzschild screen eigenvalues are
therefore
\begin{equation}
 \lambda_\pm(l)=\pm 3\frac{M}{r(l)^3}\sin^2\psi
 =\pm 3M\frac{b^2}{\left(b^2+l^2\right)^{5/2}} .
 \label{eq:schw-grazing-lambda}
\end{equation}
Equation \eqref{eq:time-delay-general} gives the ray-integrated split
\begin{equation}
 |\Delta t_{12}^{\rm Schw}|
 =3\zeta_{\rm DH}Mb^2
   \int_{-\infty}^{\infty}\frac{\dd l}{(b^2+l^2)^{5/2}}
 =\frac{4\alpha}{45\pi}\lambda_e^2\frac{M}{b^2}
 +O\!\left(\zeta_{\rm DH}\frac{M^2}{b^3},(\zeta_{\rm DH}R)^2\right).
 \label{eq:schw-grazing-delay}
\end{equation}
This calculation shows how the certified local branches produce a concrete
birefringent propagation quantity once an unperturbed ray, a branch labelling,
and the QED validity window have been specified.

\subsection{Slow Kerr: principal-frame magnetic-Weyl tilt of the DH screen eigenbasis}
\label{subsec:slow-kerr}

Daniels and Shore studied the same Drummond--Hathrell curvature coupling in
Kerr, emphasizing velocity shifts, the polarization sum rule, the horizon
theorem, and the stationary-limit surface \cite{DanielsShore1996}.  The present
benchmark starts from that local DH setting but fixes a different observable
problem: a two-ended weak-lensing experiment with an asymptotic Sachs screen,
branch labelling, and a phase/group-delay retardance at fixed
\(\omega_\infty\).  Locally, in a specified principal orthonormal frame, the
magnetic Weyl tensor tilts the eigenlines of the DH screen operator; the Born
calculation below then shows which part of this local data survives as an
asymptotic scattering quantity.

In a principal orthonormal frame the Kerr complex tidal matrix is
\cite{Kerr1963,Chandrasekhar1983}
\begin{equation}
 E_{ij}+iB_{ij}
 =\frac{M}{(r-i a\cos\theta)^3}\diag(-2,1,1),
 \label{eq:kerr-complex-tidal}
\end{equation}
up to the sign of the imaginary part under reversal of the spatial triad.  To
first order in \(a/r\), define \(\kappa=M/r^3\) and
\(\beta_K=3a\cos\theta/r\).  For azimuthal propagation \(n=e_{\hat\phi}\), with
the principal screen basis \((e_{\hat r},e_{\hat\theta})\), the branch screen is
\begin{equation}
 {\cal K}_{\hat\phi}
 =3\kappa
 \begin{pmatrix}
 -1 & \beta_K\\
 \beta_K & 1
 \end{pmatrix}
 +O\!\left(\kappa\frac{a^2}{r^2}\right).
 \label{eq:kerr-azimuthal-screen}
\end{equation}
The eigenvalues are therefore
\begin{equation}
 \lambda_\pm^{\rm Kerr}=\pm3\kappa
 \sqrt{1+\beta_K^2}
 =\pm3\kappa\left[1+O\!\left(\frac{a^2}{r^2}\right)\right],
 \label{eq:kerr-eigenvalues}
\end{equation}
so the azimuthal phase-velocity split in this principal-frame screen has no term
linear in spin.  The diagonalizing angle, however, is spin-odd:
\begin{equation}
 \tan(2\chi_K)=\frac{2{\cal K}_{\hat r\hat\theta}}
 {{\cal K}_{\hat r\hat r}-{\cal K}_{\hat\theta\hat\theta}}
 =-\beta_K,
 \qquad
 \chi_K=-\frac{3a\cos\theta}{2r}+O\!\left(\frac{a^3}{r^3}\right).
 \label{eq:kerr-polarization-rotation}
\end{equation}
Thus the magnetic Weyl part acts as a shear of the local screen: the electric
Weyl tensor fixes the two Schwarzschild-like principal axes, while the
spin-induced magnetic Weyl entry tilts the DH eigenlines by half the ratio of
off-diagonal magnetic shear to electric tidal anisotropy.

The angle \(\chi_K\) is a local orientation: it measures the two DH eigenlines
relative to the specified principal orthonormal Kerr screen
\((e_{\hat r},e_{\hat\theta})\).  Under a local \(SO(2)\) rotation of the screen
basis by \(\psi(x)\),
\begin{equation}
        s_A\mapsto R_A{}^B(\psi)s_B,
        \qquad \chi_K\mapsto\chi_K-\psi,
\end{equation}
while the eigenvalues \(\lambda_\pm\) are unchanged.  Equation
\eqref{eq:kerr-polarization-rotation} is therefore the principal-frame local
eigenbasis tilt whose transport-covariant comparison is defined next.

For an operational comparison along a null ray \(\gamma(\lambda)\), choose a
Sachs screen basis \((s_1,s_2)\) and define its screen connection along the ray
by
\begin{equation}
        \varpi_\lambda:=s_2\cdot\nabla_{\dot\gamma}s_1 .
        \label{eq:sachs-screen-connection}
\end{equation}
If \(\chi(\lambda)\) denotes the local DH eigenbasis angle measured in this
chosen screen, then the accumulated eigenaxis mismatch between two points is the
invariant combination
\begin{equation}
        \Theta_\gamma=
        \chi(\lambda_2)-\chi(\lambda_1)
        +\int_{\lambda_1}^{\lambda_2}\varpi_\lambda\,\dd\lambda .
        \label{eq:transport-covariant-eigenaxis-mismatch}
\end{equation}
Indeed, under \(s_A\mapsto R_A{}^B(\psi)s_B\), one has
\(\varpi_\lambda\mapsto\varpi_\lambda+\dot\psi\), so
\eqref{eq:transport-covariant-eigenaxis-mismatch} is unchanged.  In a
parallel-transported Sachs basis \(\varpi_\lambda=0\), but then the diagonalizing
angle must be recomputed in that transported basis rather than read directly
from the local principal tetrad.

Relative to the same principal tetrad, two nearby azimuthal ray points have the
local scalar change
\begin{equation}
 \Delta\chi_K
 =-\frac{3a}{2}\left(\frac{\cos\theta_2}{r_2}
       -\frac{\cos\theta_1}{r_1}\right)
 +O\!\left(\frac{a^3}{r^3}\right),
 \label{eq:kerr-axis-mismatch}
\end{equation}
and equivalently
\begin{equation}
 \dd\chi_K=\frac{3a}{2}\left(\frac{\cos\theta}{r^2}\,\dd r
 +\frac{\sin\theta}{r}\,\dd\theta\right)+O\!\left(\frac{a^3}{r^3}\right).
 \label{eq:kerr-axis-one-form}
\end{equation}
A polarization-resolved ray-tracing code should therefore evolve the
transport-covariant eigenbasis rate
\begin{equation}
        \dot\Theta_K=\dot\chi_K+\varpi_\lambda,
        \label{eq:kerr-transport-rate}
\end{equation}
where \(\varpi_\lambda=s_2\cdot\nabla_{\dot\gamma}s_1\) is the connection of the
chosen screen frame.  Equation \eqref{eq:kerr-axis-one-form} is only the
principal-frame variation of the local algebraic tilt.  The periastron check
below proves explicitly that the spin-odd Born eigenvalue term is a
longitudinal null-normalization effect, not an \(SO(2)\) screen-rotation
paradox.

\paragraph{Asymptotic grazing-ray benchmark.}
For an asymptotic source--observer experiment the endpoint curvature vanishes,
so a local DH axis becomes degenerate at both ends and an observed polarization
angle must be defined by transport from the finite interaction region to
asymptotic Sachs frames.  A perfectly unpolarized incident beam is unchanged by a
unitary birefringent retardance; the scattering quantity below is therefore a
retardance, or the associated Stokes-vector change, for a specified polarized
input.

Take the standard weak-lensing geometry in asymptotically Cartesian coordinates.
The metric ray is approximated by
\begin{equation}
        \bm x(l)=l\,\hat{\bm k}+b\,\hat{\bm b},\qquad
        \hat{\bm k}\cdot\hat{\bm b}=0,\qquad
        r(l)=\sqrt{b^2+l^2},
        \label{eq:kerr-born-ray}
\end{equation}
where \(\hat{\bm k}\) is the incoming-to-outgoing zeroth-order propagation
direction, \(b\hat{\bm b}\) is the impact vector, and
\(\hat{\bm p}:=\hat{\bm k}\times\hat{\bm b}\) is the oriented normal to the lens
plane.  The tangent \(L^\mu=(1,\hat{\bm k})\) is a Minkowski-null Born vector.  The
exact metric null tangent differs from it by
\(\delta L=O(M/r,J/r^2)\), so inserting \(L\) into the first-order Riemann
projection changes the screen only by
\(O(M^2/r^4,MJ/r^5,J^2/r^6)\), which is part of the displayed remainder.  The
integrated retardance below is normalized at infinity; pointwise screen entries
are intermediate quantities in this chosen asymptotic embedding.

Write the Kerr angular momentum as
\begin{equation}
        \bm J=J_b\hat{\bm b}+J_p\hat{\bm p}+J_k\hat{\bm k},
        \qquad J_p=\bm J\cdot(\hat{\bm k}\times\hat{\bm b}).
        \label{eq:kerr-spin-components}
\end{equation}
For reproducibility, use the linearized Kerr perturbation, in the curvature sign
convention of \eqref{eq:screen-K},
\begin{equation}
        h_{00}=\frac{2M}{r},\qquad
        h_{0i}=-\frac{2(\bm J\times\bm x)_i}{r^3},\qquad
        h_{ij}=\frac{2M}{r}\delta_{ij}.
        \label{eq:linearized-kerr-perturbation}
\end{equation}
The only curvature input is the linearized Riemann tensor
\begin{equation}
        R_{\mu\nu\rho\sigma}^{(1)}=\frac12
        \left(h_{\mu\sigma,\nu\rho}+h_{\nu\rho,\mu\sigma}
        -h_{\mu\rho,\nu\sigma}-h_{\nu\sigma,\mu\rho}\right),
        \label{eq:linearized-riemann}
\end{equation}
and the Born projection
\begin{equation}
        {\cal K}_{AB}=R_{0A0B}+2\hat k^iR_{0AiB}
        +\hat k^i\hat k^jR_{iAjB},\qquad A,B\in\{b,p\}.
        \label{eq:kerr-born-projection-formula}
\end{equation}
To evaluate it, choose axes
\((\hat{\bm b},\hat{\bm p},\hat{\bm k})=(e_1,e_2,e_3)\),
\(\bm x=(b,0,l)\), and \(\bm J=(J_b,J_p,J_k)\); rotational covariance then
restores the vector notation.  The derivative identities used are
\begin{align}
        \partial_i\partial_j r^{-1}
        &=\frac{3x_ix_j-r^2\delta_{ij}}{r^5},\nonumber\\
        \partial_j A_i
        &=\epsilon_{imn}J_m\left(\frac{\delta_{nj}}{r^3}
        -\frac{3x_nx_j}{r^5}\right),\nonumber\\
        \partial_j\partial_k A_i
        &=\epsilon_{imn}J_m\left[-\frac{3(\delta_{nj}x_k+\delta_{nk}x_j+\delta_{jk}x_n)}{r^5}
        +\frac{15x_nx_jx_k}{r^7}\right],
        \qquad A_i=\frac{(\bm J\times\bm x)_i}{r^3}.
        \label{eq:kerr-derivative-identities}
\end{align}
Substitution in \eqref{eq:kerr-born-projection-formula} gives the mass entries
\({\cal K}_{bb}^{(M)}=-3Mb^2/r^5\),
\({\cal K}_{pp}^{(M)}=+3Mb^2/r^5\),
\({\cal K}_{bp}^{(M)}=0\), and the spin entries
\begin{equation}
        {\cal K}_{bb}^{(J)}=-\frac{6J_p b}{r^5},\qquad
        {\cal K}_{pp}^{(J)}=+\frac{6J_p b}{r^5},\qquad
        {\cal K}_{bp}^{(J)}=-\frac{3b[J_b(3b^2-2l^2)+5J_kbl]}{r^7}.
        \label{eq:kerr-spin-screen-components}
\end{equation}
Therefore the first-order screen matrix in the parallel asymptotic screen
\((\hat{\bm b},\hat{\bm p})\) is
\begin{equation}
 {\cal K}_{\rm Born}(l)=
 \begin{pmatrix}
 -\dfrac{3Mb^2}{r^5}-\dfrac{6J_p b}{r^5} &
 -\dfrac{3b\,[J_b(3b^2-2l^2)+5J_kbl]}{r^7}\\[1.1em]
 -\dfrac{3b\,[J_b(3b^2-2l^2)+5J_kbl]}{r^7} &
 \dfrac{3Mb^2}{r^5}+\dfrac{6J_p b}{r^5}
 \end{pmatrix}
 +O\!\left(\frac{M^2}{r^4},\frac{MJ}{r^5},\frac{J^2}{r^6}\right).
        \label{eq:kerr-born-screen}
\end{equation}
This is a transported, asymptotic-frame calculation, not the unit-frequency
principal azimuthal screen \eqref{eq:kerr-azimuthal-screen}.  It gives the same
Schwarzschild term as \eqref{eq:schw-grazing-lambda} and displays how different
spin projections enter: \(J_p\) changes the eigenvalue split at first order,
whereas \(J_b\) and \(J_k\) rotate the instantaneous eigenvectors.

\paragraph{Periastron frame check.}
There is no conflict with the screen-eigenvalue invariance under local
\(SO(2)\) rotations.  For a fixed spacetime point and fixed null covector
\(\ell\), the DH screen matrix is a symmetric form on the Sachs quotient
\(\ell^\perp/\ell\).  If two screen representatives are related by
\begin{equation}
        s'_A=O_A{}^B s_B+\gamma_A\ell,\qquad O\in O(2),
        \label{eq:screen-quotient-transform}
\end{equation}
then Riemann antisymmetry gives
\begin{equation}
        R_{abcd}\ell^a s_A^{\prime b}\ell^c s_B^{\prime d}
        =O_A{}^C O_B{}^D R_{abcd}\ell^a s_C^b\ell^c s_D^d .
        \label{eq:screen-similarity-transform}
\end{equation}
The \(\gamma_A\ell\) pieces drop out because they put two copies of \(\ell\) in
an antisymmetric Riemann index pair.  Hence \(\tr\mathcal K\),
\(\det\mathcal K\), and the two eigenvalues are invariant under screen rotations
and null-gauge changes at fixed \(\ell\).

At closest approach in an equatorial prograde or retrograde geometry,
\(\bm J=J_p\hat{\bm p}\), the asymptotically transported Born frame and the Carter
principal frame are instead related, in the \((0,\hat{\bm k})\) plane, by a
longitudinal boost plus a screen rotation:
\begin{equation}
 \bar e_0=\cosh\eta\,e^{\rm P}_0+\sinh\eta\,e^{\rm P}_{\hat\phi},
 \qquad
 \bar e_{\hat k}=\sinh\eta\,e^{\rm P}_0+\cosh\eta\,e^{\rm P}_{\hat\phi},
 \qquad
 \eta=\frac{J_p}{Mb}+O\!\left(\frac{M}{b},\frac{a^2}{b^2}\right).
 \label{eq:principal-born-boost}
\end{equation}
Only the spin-odd longitudinal part of this weak-field principal-to-asymptotic
transformation is needed here.  The mass redshift and bending pieces multiply the
Schwarzschild screen by an extra \(O(M/b)\) and therefore enter only the discarded
\(O(M^2/b^4)\) remainder.  The sign of the displayed rapidity is fixed by
\(J_p=\pm Ma\).  Therefore the Born null vector satisfies
\begin{equation}
 \bar\ell^a:=\bar e_0^a+\bar e_{\hat k}^a
 =e^\eta\ell_{\rm P}^a+O\!\left(\frac{M}{b},\frac{a^2}{b^2}\right),
 \qquad \ell_{\rm P}^a=e_0^{{\rm P}a}+e_{\hat\phi}^{{\rm P}a} .
 \label{eq:born-null-rescaling}
\end{equation}
If \(\bar s_A=R_A{}^B s^{\rm P}_B+O(M/b,a^2/b^2)\) is the corresponding screen
rotation, then at \(l=0\)
\begin{equation}
 {\cal K}^{\rm Born}_{AB}(0)=
 e^{2\eta}R_A{}^CR_B{}^D{\cal K}^{\rm P}_{CD}
 +O\!\left(\frac{M^2}{b^4},\frac{MJ}{b^5},\frac{J^2}{b^6}\right).
 \label{eq:born-principal-screen-relation}
\end{equation}
The rotation leaves eigenvalues unchanged and the longitudinal null
normalization supplies the factor \(e^{2\eta}\).  Since
\(\lambda^{\rm P}_\pm=\pm(3M/b^3)[1+O(a^2/b^2)]\),
\begin{equation}
 \lambda^{\rm Born}_\pm(0)=e^{2\eta}\lambda^{\rm P}_\pm
 =\pm\left(\frac{3M}{b^3}+\frac{6J_p}{b^4}\right)
 +O\!\left(\frac{M^2}{b^4},\frac{MJ}{b^5},\frac{J^2}{b^6}\right),
 \label{eq:periastron-eigenvalue-match}
\end{equation}
which is exactly \eqref{eq:kerr-born-eigenvalues} at periastron, up to the
branch labels.  Thus eigenvalues are invariant under rotations of the same
screen at fixed null direction; the linear-in-spin Born term appears because the
asymptotic Born calculation uses a different null normalization/embedding from
the unit-frequency principal tetrad calculation.  Consequently the invariant
quantity in the scattering problem is the phase/delay integral constructed with
the chosen asymptotic normalization; the individual entries of
\({\cal K}_{\rm Born}(l)\) are pointwise representatives of that transport
problem, not local scalar observables by themselves.

The local Born eigenaxis angle is
\begin{equation}
        \chi_{\rm Born}(l)
        =\frac{J_b(3b^2-2l^2)+5J_kbl}
              {2Mb(b^2+l^2)}
        +O\!\left(\frac{J^2}{M^2b^2}\right).
        \label{eq:kerr-born-axis-angle}
\end{equation}
Consequently the signed asymptotic eigenaxis mismatch is
\begin{equation}
        \Theta^{\rm axis}_{-\infty\to+\infty}
        :=\lim_{L\to\infty}\bigl[\chi_{\rm Born}(L)-\chi_{\rm Born}(-L)\bigr]
        =0.
        \label{eq:kerr-born-axis-cancellation}
\end{equation}
Thus the magnetic-Weyl/local-axis effect does not by itself generate a new
nonzero asymptotic Faraday angle for an infinity-to-infinity grazing ray.  For
example, a polar grazing ray \((J_k\neq0, J_p=J_b=0)\) has a finite local axis
excursion \(|\chi_{\rm Born}|_{\max}=5|a|/(4b)\), but its signed endpoint
mismatch vanishes.

The surviving basis-covariant scattering quantity in this Born geometry is the
birefringent retardance.  To first order in spin the eigenvalues of
\eqref{eq:kerr-born-screen} are
\begin{equation}
        \lambda_\pm(l)=\pm\left[
        \frac{3Mb^2}{r(l)^5}+\frac{6J_p b}{r(l)^5}\right]
        +O\!\left(\frac{M^2}{r^4},\frac{MJ}{r^5},\frac{J_b^2+J_k^2}{M r^5}\right),
        \label{eq:kerr-born-eigenvalues}
\end{equation}
and \eqref{eq:signed-time-delay-general}, with \(\Delta t_{+-}:=t_- -t_+\), gives
\begin{align}
        \Delta t_{+-}^{\rm Kerr,Born}
        &=\zeta_{\rm DH}\int_{-\infty}^{\infty}
        \left[\frac{3Mb^2}{(b^2+l^2)^{5/2}}
        +\frac{6J_p b}{(b^2+l^2)^{5/2}}\right]\dd l
        \nonumber\\
        &=\frac{4\alpha}{45\pi}\lambda_e^2\frac{M}{b^2}
        \left[1+2\frac{\bm J\cdot(\hat{\bm k}\times\hat{\bm b})}{Mb}
        \right]
        +O\!\left(\zeta_{\rm DH}\frac{M^2}{b^3},
                  \zeta_{\rm DH}\frac{J^2}{Mb^4},
                  (\zeta_{\rm DH}R)^2\right).
        \label{eq:kerr-born-retardance}
\end{align}
The corresponding dimensionless optical phase retardance is
\begin{equation}
        \delta_{+-}^{\rm Kerr,Born}=\omega_\infty\Delta t_{+-}^{\rm Kerr,Born},
\end{equation}
within the same nondispersive low-frequency approximation.

\paragraph{Astrophysical scale.}
Restoring SI units with \(\lambda_e=3.8616\times10^{-13}\,\mathrm{m}\) and
\(GM_\odot/c^2=1.4766\,\mathrm{km}\), a photon of energy \(E\) has
\begin{align}
 \Delta t_{+-}^{\rm Kerr,Born}
 &\simeq 1.93\times10^{-43}\,\mathrm{s}
 \left(\frac{10M_\odot}{M}\right)
 \left(\frac{6M}{b}\right)^2
 \left[1+2\frac{a/M}{b/M}\right],\nonumber\\
 \delta_{+-}^{\rm Kerr,Born}
 &\simeq 1.76\times10^{-24}
 \left(\frac{E}{6\,\mathrm{keV}}\right)
 \left(\frac{10M_\odot}{M}\right)
 \left(\frac{6M}{b}\right)^2
 \left[1+2\frac{a/M}{b/M}\right].
 \label{eq:astrophysical-scale}
\end{align}
Thus a prograde ray with \(a/M=0.9\) and \(b=6M\) around a \(10M_\odot\) black
hole gives \(\delta_{+-}\simeq2.3\times10^{-24}\) at \(6\,\mathrm{keV}\), while
\(M=1.4M_\odot\), \(b=5M\), and \(a/M=0.3\) gives
\(\delta_{+-}\simeq2.0\times10^{-23}\).  IXPE-class gas-pixel X-ray polarimeters
operate in the few-keV band, and proposed future missions remain photon-limited
Stokes polarimeters rather than direct detectors of such microscopic phase
retardances \cite{Soffitta2021IXPEInstrument,Fabiani2018FutureXrayPolarimetry}.
Even compared with an illustrative \(10^{-2}\,\mathrm{rad}\) polarization-angle
scale, \eqref{eq:astrophysical-scale} is more than twenty orders of magnitude
smaller; its role here is scale-setting and code-benchmark normalization.

For an equatorial prograde or retrograde grazing ray, \(J_p=\pm Ma\), and the
branch delay acquires the spin-odd fractional correction \(\pm2a/b\).  This is
the asymptotic, frame-independent Kerr polarization signature supported by the
first-order DH screen reduction.  It is a low-frequency retardance of the
nondispersive surrogate; the signed eigenaxis rotation at infinity remains zero
at this order.

\paragraph{Scope of the cancellation.}
The cancellation used here is a far-zone statement about a two-ended weak-lensing
scattering problem with a specified asymptotic Sachs screen.  It follows from the
universal mass-spin metric in the interaction region and does not require an
exact Type-D argument.  The precise statement is the following.

\begin{proposition}[Far-zone eigenaxis cancellation and spin-odd retardance]
\label{prop:far-zone-axis-cancellation}
Consider any asymptotically flat stationary axisymmetric vacuum metric whose
weak-lensing region has the universal mass-spin far-zone form
\begin{equation}
 g_{00}=-1+\frac{2M}{r}+O(r^{-2}),\qquad
 g_{0i}=-\frac{2(\bm J\times\bm x)_i}{r^3}+O(r^{-3}),\qquad
 g_{ij}=\left(1+\frac{2M}{r}\right)\delta_{ij}+O(r^{-2}),
 \label{eq:universal-far-zone-metric}
\end{equation}
and let the unperturbed ray and the asymptotic screen be those of
\eqref{eq:kerr-born-ray}.  To first order in \(M/b\), \(J/(Mb)\), and
\(\zeta_{\rm DH}R\), the signed DH eigenaxis mismatch between the two
asymptotic screen frames vanishes,
\begin{equation}
        \Theta^{\rm axis}_{-\infty\to+\infty}=0,
        \label{eq:far-zone-axis-cancellation}
\end{equation}
whereas the branch-labelled birefringent retardance is
\begin{equation}
        \Delta t_{+-}=
        \frac{4\alpha}{45\pi}\lambda_e^2\frac{M}{b^2}
        \left[1+2\frac{\bm J\cdot(\hat{\bm k}\times\hat{\bm b})}{Mb}\right]
        +O\!\left(\zeta_{\rm DH}\frac{M^2}{b^3},
                  \zeta_{\rm DH}\frac{J^2}{Mb^4},
                  (\zeta_{\rm DH}R)^2\right).
        \label{eq:far-zone-retardance-theorem}
\end{equation}
Thus the leading conclusion depends only on the universal mass-spin far-zone
sector; higher multipoles and exact algebraic speciality enter beyond the
retained order.
\end{proposition}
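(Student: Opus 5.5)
The plan is to reduce the general far-zone metric \eqref{eq:universal-far-zone-metric} to the Kerr Born calculation already carried out above, and to show that every deviation from it lies in the displayed remainder.

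\textbf{Step 1: only the mass--spin part matters.} Write the metric as $g=\eta+h^{\rm MJ}+h^{\rm rest}$. Here $h^{\rm MJ}$ is exactly the linearized Kerr perturbation \eqref{eq:linearized-kerr-perturbation}. The remainder obeys $h^{\rm rest}_{00},h^{\rm rest}_{ij}=O(r^{-2})$ and $h^{\rm rest}_{0i}=O(r^{-3})$. Since $R^{(1)}$ in \eqref{eq:linearized-riemann} is linear in $h$, the Born projection \eqref{eq:kerr-born-projection-formula} splits into a mass--spin part and a remainder part.

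\begin{itemize}
\item The mass--spin part is literally \eqref{eq:kerr-born-screen}, built from the identities \eqref{eq:kerr-derivative-identities}.
\item The remainder part has two sources. Terms quadratic in $h$ contribute $O(M^2/r^4)$. Stationary vacuum multipoles beyond $M$ and $J$ contribute a further power of $1/r$ relative to the retained terms. I would justify the latter by the standard harmonic-gauge multipole expansion: the $O(r^{-2})$ mass-type pieces are either $O(M^2)$ or a mass dipole that vanishes in the centre-of-mass frame.
\end{itemize}

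Integrated along $r=\sqrt{b^2+l^2}$, these remainders give contributions of order $\zeta_{\rm DH}M^2/b^3$ or smaller. That is inside the stated error. I also need that replacing the exact null tangent by $L=(1,\hat{\bm k})$ costs only the remainder already quoted before \eqref{eq:kerr-spin-components}. Axisymmetry is used only to guarantee that the quadrupole is the first correction. It is not needed for the leading order.

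\textbf{Step 2: the retardance.} Diagonalize \eqref{eq:kerr-born-screen} to first order in $J$. The off-diagonal entries are $O(J)$ and the diagonal split is $O(M)$, so they shift the eigenvalues only at $O(J^2/(Mr^5))$. The diagonal spin term $6J_pb/r^5$ enters linearly, which gives \eqref{eq:kerr-born-eigenvalues}.

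Substitute into \eqref{eq:signed-time-delay-general} with $\lambda_+-\lambda_-=2\lambda_+$, and use
\[
\int_{-\infty}^{\infty}(b^2+l^2)^{-5/2}\,\dd l=\frac{4}{3b^4}.
\]
The Schwarzschild piece then reproduces \eqref{eq:schw-grazing-delay}. The $J_p$ piece supplies the factor $2J_p/(Mb)$, yielding \eqref{eq:far-zone-retardance-theorem}. The branch labelling is fixed by continuity from $J=0$ as in Sec.~\ref{sec:scope}.

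\textbf{Step 3: the eigenaxis cancellation.} In the parallel asymptotic screen $(\hat{\bm b},\hat{\bm p})$ the connection satisfies $\varpi=O(M/b)$ times curvature-sized quantities. Its contribution to \eqref{eq:transport-covariant-eigenaxis-mismatch} is therefore higher order, and $\Theta$ reduces to the endpoint difference of $\chi_{\rm Born}$.

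From
\[
\tan 2\chi=\frac{2\mathcal K_{bp}}{\mathcal K_{bb}-\mathcal K_{pp}},
\]
one recovers \eqref{eq:kerr-born-axis-angle}: a ratio of polynomials in $l$ of degree at most two over $b^2+l^2$. The limits as $l\to\pm\infty$ exist and are equal. The $J_k$ term is odd in $l$ and decays like $1/l$. The $J_b$ term tends to the same constant $-J_b/(Mb)$ at both ends. Hence the difference vanishes.

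\textbf{Main obstacle.} I expect the main difficulty to be this endpoint step. Both $\mathcal K_{bp}$ and $\mathcal K_{bb}-\mathcal K_{pp}$ decay, as $r^{-5}$, so the angle is a ratio of vanishing quantities. Remainder terms of relative size $O(M/b)$ or from higher multipoles could in principle dominate the ratio at large $|l|$. This is exactly the degenerate-axis issue flagged before \eqref{eq:kerr-born-ray}.

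I would handle it as follows. Define $\Theta$ via the limit $L\to\infty$ at fixed order in the expansion parameters, as \eqref{eq:kerr-born-axis-cancellation} does. Then check that each neglected term in numerator and denominator is suppressed by $M/r$ or $b/r$ uniformly relative to the retained $r^{-5}$ terms. If that uniform check fails, the fallback is to state the cancellation for the truncated first-order screen, which is what the proposition's ``to first order'' clause asserts.
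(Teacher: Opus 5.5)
Your proposal is correct and follows the paper's proof essentially step for step. You reduce to the linearized-Kerr Born screen with diagonal entries $\mp A$ and off-diagonal entry $C$, and note that $C$ (from $J_b,J_k$) enters the eigenvalue split only quadratically while the $J_p$ part of $A$ enters linearly. You then integrate with $\int_{-\infty}^{\infty}(b^2+l^2)^{-5/2}\,\dd l=4/(3b^4)$ and observe that $\chi=-C/(2A_0)$ has the same limit $-J_b/(Mb)$ at both ends.

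The differences are minor. First, the paper simply sets $\varpi_l=0$ in the parallel asymptotic screen instead of estimating it as higher order. Second, it takes the endpoint limits of the truncated first-order formula, which is exactly your fallback. The uniform check you propose would not go through with the quoted pointwise remainder $O(M^2/r^4)$, since that term outgrows $Mb^2/r^5$ once $r\gg b^2/M$.
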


\begin{proof}
The metric expansion \eqref{eq:universal-far-zone-metric} fixes the leading
Weyl electric and magnetic fields solely by \(M\) and \(\bm J\).  Projecting
\(R_{abcd}\ell^a s_A^b\ell^c s_B^d\) onto the parallel asymptotic screen
\((\hat{\bm b},\hat{\bm p})\) along the Born ray gives
\eqref{eq:kerr-born-screen}, independently of higher multipoles, because their
contributions enter at the displayed remainder order.  Write this screen matrix
as
\begin{equation}
        {\cal K}_{\rm Born}(l)=
        \begin{pmatrix}-A(l)&C(l)\\ C(l)&A(l)\end{pmatrix}+O(2),
\end{equation}
with
\begin{equation}
        A(l)=\frac{3Mb^2+6J_p b}{(b^2+l^2)^{5/2}},\qquad
        C(l)=-\frac{3b[J_b(3b^2-2l^2)+5J_kbl]}{(b^2+l^2)^{7/2}} .
\end{equation}
The local diagonalizing angle in this transported screen satisfies
\(\chi(l)=-C(l)/(2A_0(l))+O(J^2/(M^2b^2))\), where
\(A_0(l)=3Mb^2/(b^2+l^2)^{5/2}\).  Hence
\begin{equation}
        \chi(l)=
        \frac{J_b(3b^2-2l^2)+5J_kbl}{2Mb(b^2+l^2)}
        +O\!\left(\frac{J^2}{M^2b^2}\right),
\end{equation}
so the two endpoint limits are equal:
\(\lim_{l\to+\infty}\chi(l)=\lim_{l\to-\infty}\chi(l)=-J_b/(Mb)\).  In the
parallel asymptotic screen \(\varpi_l=0\), and in any other screen the extra
connection term in \eqref{eq:transport-covariant-eigenaxis-mismatch} changes by
exactly the compensating endpoint frame rotation.  Therefore the invariant
signed endpoint mismatch is zero.  This is the geometric protection mechanism:
the magnetic Weyl field can tilt the instantaneous local eigenbasis, but in a
two-ended asymptotically flat scattering problem the endpoint curvature
vanishes, the asymptotic eigenline is degenerate, and the transported endpoint
comparison removes the frame-dependent tilt.

The eigenvalues are
\(\lambda_\pm=\pm\sqrt{A(l)^2+C(l)^2}\).  The off-diagonal tilt term \(C\) enters
only quadratically in the eigenvalue split, while the \(J_p\) part of \(A\) is
linear.  Thus, to the retained order,
\begin{equation}
        \lambda_+(l)-\lambda_-(l)=
        2\left[\frac{3Mb^2}{(b^2+l^2)^{5/2}}
        +\frac{6J_p b}{(b^2+l^2)^{5/2}}\right].
\end{equation}
Substitution into \eqref{eq:signed-time-delay-general} and the elementary integral
\(\int_{-\infty}^{\infty}(b^2+l^2)^{-5/2}\,\dd l=4/(3b^4)\) gives
\eqref{eq:far-zone-retardance-theorem}.
\end{proof}

The present Kerr statement should therefore be read as follows.  Daniels--Shore
established the Kerr DH propagation setting in a stationary-frame calculation.
Here the unit-frequency principal azimuthal screen has a spin-even split and a
spin-odd local eigenbasis tilt, while the asymptotically normalized Born screen
has the invariant periastron relation \eqref{eq:periastron-eigenvalue-match} and
the spin-odd integrated retardance \eqref{eq:kerr-born-retardance}.  A full Kerr
polarization calculation must combine \eqref{eq:branch-Hamiltonians} with the
transport-covariant update \eqref{eq:kerr-transport-rate} along the chosen ray.

\section{Interpretation and dynamical scope}
\label{sec:interpretation}

For a supplied local nondispersive constitutive tensor inheriting the reflection
symmetry, \(P_{\rm par}\) is the exact Tamm--Rubilar polynomial of that supplied
tensor.  It is a physical principal polynomial only for a genuine nondispersive
system.  For DH input it is the root polynomial of a low-frequency surrogate;
nonlinear powers generated by the cubic Tamm--Rubilar contraction are
finite-amplitude conditioning tests, not higher-loop QED corrections.

The Schwarzschild calculation is therefore an exact algebraic factorization of
the supplied DH surrogate and a first-order physical calibration against the
known DH cone shift.  The Ricci-flat screen reduction supplies the propagation
layer: branch Hamiltonians, local margin bounds, the Schwarzschild grazing-ray
split, and the Kerr Born branch-delay benchmark.  Generic full-parity rational tests and the flat-slice shift-potential geometry are retained as supplementary reproducibility and ADM/Codazzi/Gauss consistency checks.

A production code should call the full invariant Tamm--Rubilar contraction
whenever the supplied tensor is not in the SSSW-frame subclass, attach the DH EFT
status in \cref{subsec:qed-dispersion-error} to every curvature-based root
evaluation, and pass only accepted, labelled branches to Hamiltonian transport.

\section{Conclusion}
\label{sec:conclusion}

The paper gives a local algebraic diagnostic for supplied nondispersive
constitutive tensors and calibrates it on the DH curvature coupling.  The parity
theorem proves the exact block form \(G_{++}\oplus G_{--}\) and the even-in-\(z\)
structure of the full Tamm--Rubilar polynomial; the vanishing of the cubic
coefficient belongs only to the reconstructed SSSW-frame subclass.  The
six-variable meridional polynomial is retained as an auditable benchmark, not as
a complete parity-reduced theory.

For Schwarzschild DH input the full local tensor factorizes and reproduces the
standard low-frequency result: radial propagation is unshifted and tangential
propagation splits into two first-order polarization branches.  The Ricci-flat
screen reduction then turns this local splitting into branch Hamiltonians and a
ray-integrated weak-lensing delay split.

The Kerr weak-lensing calculation is the main rotating-spacetime benchmark.  It
separates two effects that should not be conflated.  The local magnetic-Weyl
angle \(\chi_K=-3a\cos\theta/(2r)+O(a^3/r^3)\) tilts the instantaneous principal
screen eigenbasis but does not, in the stated infinity-to-infinity Born setup,
produce a nonzero gauge-invariant endpoint Faraday angle at leading order.  The
nonzero frame-independent low-frequency observable is instead the branch-delay
split, or phase retardance \(\delta_{+-}=\omega_\infty\Delta t_{+-}\), with the
spin-odd correction proportional to
\(\bm J\cdot(\hat{\bm k}\times\hat{\bm b})/b^3\).

The result is a reproducible local layer between a supplied constitutive tensor
and later global ray-tracing or closure calculations: construct \(P_x(q)\),
certify real ADM-separated branches and margins, verify the DH EFT domain when
applicable, and only then evolve the accepted Hamiltonian branches.

\appendix

\section{Fixed-support meridional connection check}
\label{app:fixed-support}

For completeness we record the kinematic check behind the meridional support
used in the main text.  With
\(E_1=[01]\), \(E_2=[02]\), \(E_3=[12]\), the ADM-normal meridional connection
matrices, after removing the inessential local spatial \(SO(2)\) rotation, are
\[
 \Gamma_{\hat r}=\begin{pmatrix}0&0&-\mathcal B\\0&0&\mathcal A\\-\mathcal B&\mathcal A&0\end{pmatrix},
 \qquad
 \Gamma_{\hat\theta}=\begin{pmatrix}0&0&-\mathcal D\\0&0&\mathcal B\\-\mathcal D&\mathcal B&0\end{pmatrix}.
\]
A common fixed rank-one meridional ansatz must be a common invariant line of
these two real symmetric matrices.  Writing
\(M(\alpha,\beta)=\begin{psmallmatrix}0_{2\times2}&\nu\\ \nu^T&0\end{psmallmatrix}\)
with \(\nu=(-\beta,\alpha)^T\), a common line requires
\((-\mathcal B,\mathcal A)^T\parallel(-\mathcal D,\mathcal B)^T\), hence
\[
        \mathcal A\mathcal D-\mathcal B^2=0 .
\]
Away from this nongeneric locus, a fixed ansatz containing \([01]\) is forced to
include \([12]\) and then \([02]\), giving the full meridional support
\(\spanop\{[01],[02],[12]\}\).  This is only a support-minimality check; it is
not a dynamical area-metric field equation.

\section{Flat-slice shift-potential test geometry}
\label{app:shift-background}
\label{sec:background}

The flat-slice axisymmetric shift-potential geometry serves strictly as a
formal tensorial consistency testbed, not as an astrophysical solution.  It gives
a differentiable reflection-symmetric background with nonzero meridional shear,
\(K_{ij}=D_iD_j\Phi\), on which the ADM momentum identity, Codazzi relations,
and Gauss equation test the parity-block projection dictionary.  The metric data
are written in
geometric units $G=c=1$ with ADM sign convention
\begin{equation}
  \alpha=1,\qquad \gamma_{ij}=\delta_{ij},\qquad \beta^\flat=-\dd\Phi .
\end{equation}
The spatial slices are flat and time independent, so
\begin{equation}
  K_{ij}=-D_{(i}\beta_{j)}=D_iD_j\Phi .
\end{equation}
The numerical examples use a standard smooth radial top-hat transition
function
\begin{equation}
 S_{\rho,\sigma}(r)
 =\frac{\tanh[\sigma(r+\rho)]-\tanh[\sigma(r-\rho)]}{2\tanh(\sigma\rho)} .
 \label{eq:smooth-top-hat}
\end{equation}
The polar axis is the symmetry axis and
\begin{equation}
  \Phi(r,\theta)=v\,r\,g(r)\cos\theta ,
\end{equation}
where
\begin{equation}
  f(r)=1-S_{\rho,\sigma}(r),
  \qquad
  g(r)=\frac{1}{r}\int_0^r f(s)\,\dd s,
  \qquad
  g'(r)=\frac{f(r)-g(r)}{r} .
\end{equation}
The \(C^\infty\) tanh top-hat \eqref{eq:smooth-top-hat} is chosen for three
practical reasons: the parameters \(\rho\) and \(\sigma\) independently control
the transition radius and width, the profile and its derivatives are analytic,
and the resulting potential generates a localized transition region of nonzero mixed
meridional shear.  In the present calculation the profile is used only to define
the scalar potential \(\Phi=v r g(r)\cos\theta\) and the associated extrinsic-curvature
data \(K_{ij}=D_iD_j\Phi\); no trajectory or source model is specified.

The relevant orthonormal components of the extrinsic curvature are
\begin{equation}
 \mathcal A:=K_{\hat r\hat r}=v f'(r)\cos\theta,
 \qquad
 \mathcal B:=K_{\hat r\hat\theta}=-v g'(r)\sin\theta,
 \qquad
 \mathcal D:=K_{\hat\theta\hat\theta}=K_{\hat\phi\hat\phi}=v g'(r)\cos\theta .
\end{equation}
The equality $\mathcal D=vg'(r)\cos\theta$ follows from $(rg)'=f$ and
the standard spherical-coordinate second derivatives of $\Phi$.  The meridional determinant
\begin{equation}
 \mathcal A\mathcal D-\mathcal B^2
\end{equation}
is the same ADM-normal meridional $SO(2)$-invariant quantity that appeared in the fixed rank-one closure condition.  At the
equator it equals $-v^2[g'(r)]^2$ wherever $vg'(r)\ne0$, so the chosen
background genuinely tests the full meridional ansatz space rather than a
rank-one subcase.

The feature used in the Drummond--Hathrell projection is the commuting-derivative identity.  Because $K_{ij}=D_iD_j\Phi$ on a flat slice, the ADM
momentum density vanishes identically:
\begin{equation}
  8\pi j_i=D_j(K^j{}_i-\delta^j{}_iK)=D_i(D^2\Phi)-D_i(D^2\Phi)=0 .
\end{equation}
This identity, together with the Codazzi equations, is the tensorial selection
rule checked in the supplementary flat-slice implementation.

\section{Compact symbolic derivation of the restricted quartic}
\label{app:quartic-derivation}

This appendix gives the compact construction whose expanded result is checked in
\cref{app:expanded-coefficients} and supplied in the symbolic supplement.  Let
the ordered spacetime indices be \(a,b=0,1,2,3\), let spatial indices be
\(i,j=1,2,3\), and use \(\epsilon^{123}=+1\).  In the SSSW parametrization of
\cref{sec:meridional-sector} the reconstructed area-metric constitutive tensor is represented by
\begin{align}
G^{0i0j}&=-U^{ij},\label{eq:app-G-0011}\\
G^{0ijk}&=\Delta(\delta^i{}_m+W^i{}_m)\epsilon^{mjk},\label{eq:app-G-0ijk}\\
G^{ijkl}&=\Delta^2\epsilon^{ijm}\epsilon^{kln}L_{mn},\label{eq:app-G-ijkl}
\end{align}
with \(U,L,W,\Delta\) defined in \eqref{eq:sssw-definitions}--\eqref{eq:mixed-block-W}.  The frame condition \(WU=(WU)^T\) is imposed by the definitions of \(a,b,c,d\); it is not an additional field equation.

The principal tensor is the Tamm--Rubilar/SSSW cubic contraction of the area metric,
\begin{equation}
\mathcal G^{abcd}[G]
=-\frac{1}{24}\,\epsilon_{mnpq}\epsilon_{rstu}\,
G^{mn r(a}G^{b|ps|c}G^{d)qtu},
\label{eq:app-TR-tensor}
\end{equation}
where \(\epsilon_{mnpq}\) is the four-dimensional Levi-Civita tensor density, parentheses denote symmetrization over the free indices \(a,b,c,d\), and vertical bars exclude the enclosed indices from that symmetrization.  This notation is kept distinct from the frequency variable \(\omega\) in the covector below.  The raw characteristic polynomial is
\begin{equation}
\widetilde P_\Psi(q):=\mathcal G^{abcd}[G(\Psi)]q_aq_bq_cq_d,
\qquad q_a=(\omega,x,y,z).
\label{eq:app-raw-P}
\end{equation}
Substituting \eqref{eq:app-G-0011}--\eqref{eq:app-G-ijkl} into \eqref{eq:app-TR-tensor} gives
\begin{equation}
\widetilde P_\Psi
=\Delta^2\omega^4+\widetilde C_2\omega^2+\widetilde C_1\omega+\widetilde C_0,
\label{eq:app-raw-quartic}
\end{equation}
with no cubic term.  The cancellation of the \(\omega^3\) coefficient is a useful invariant check: it follows from the SSSW frame condition \(WU=(WU)^T\), equivalently from the absence of the corresponding antisymmetric magneto-electric trace in the reconstructed weakly birefringent sector.

On the symmetric-positive branch \(\Delta^2=uw-s^2>0\), multiplication of the characteristic polynomial by a positive nonzero scalar does not change the characteristic set.  The monic polynomial of \cref{thm:restricted-polynomial} is therefore
\begin{equation}
P_\Psi(q):=\frac{\widetilde P_\Psi(q)}{\Delta^2}
=\omega^4+C_2\omega^2+C_1\omega+C_0,
\qquad
C_i:=\frac{\widetilde C_i}{\Delta^2}.
\label{eq:app-monic-P}
\end{equation}
The compact definition \eqref{eq:app-TR-tensor}--\eqref{eq:app-monic-P} is the
conceptual derivation.  The fully expanded coefficients are supplied as a
machine-readable supplement, while \cref{app:expanded-coefficients} records the
short specializations that fix their relative signs.

\section{Coefficient checks for the restricted quartic}
\label{app:expanded-coefficients}

The expanded six-variable coefficients \(C_2,C_1,C_0\) are not printed in the
article.  They are generated from the invariant contraction
\eqref{eq:app-TR-tensor}--\eqref{eq:app-monic-P} and supplied in the symbolic
supplement, where they can be compared directly against independent
implementations.  The article retains the compact checks that are most useful
for auditing signs and normalizations.

In the metric vacuum,
\begin{equation}
 u=w=\ell=1,\qquad s=a=b=c=d=0,
\end{equation}
one obtains
\begin{equation}
 C_2=-2(x^2+y^2+z^2),\qquad C_1=0,
 \qquad C_0=(x^2+y^2+z^2)^2,
\end{equation}
so that \(P_\Psi=(\omega^2-x^2-y^2-z^2)^2\).  If the mixed sector is set to zero,
\(a=b=c=d=0\), then
\begin{equation}
 C_2^{\rm sym}=-\ell Q-\Delta^2(x^2+y^2)-(u+w)z^2,
\end{equation}
\begin{equation}
 C_0^{\rm sym}=\Delta^2\{\ell S_\perp Q+z^2(\ell S_\perp+Q)+z^4\},
 \qquad S_\perp=x^2+y^2,
 \qquad Q=ux^2+2sxy+wy^2 .
\end{equation}
These identities follow from \(G^{0i0j}=-U^{ij}\),
\(G^{0ijk}=\Delta(\delta^i{}_m+W^i{}_m)\epsilon^{mjk}\),
\(G^{ijkl}=\Delta^2\epsilon^{ijm}\epsilon^{kln}L_{mn}\), and
\(\epsilon^{123}=+1\).  They fix the relative signs of the \(\omega^2\) and
purely spatial terms.  The supplement additionally checks the scalar-envelope
specialization and the SSSW-frame cancellation of the cubic coefficient.

\section{Projection dictionary and full Drummond--Hathrell parity blocks}

\label{app:dictionary}

This appendix collects the sign and normalization material behind the main DH
application.  The convention is
\(H^{ab}=(1/2)\chi^{abcd}F_{cd}\) with metric signature \(-+++\).  If
\(\chi^{abcd}_{\partial}=\partial H^{ab}/\partial F_{cd}\) is formed by
antisymmetric differentiation, then \(\chi^{abcd}=2\chi^{abcd}_{\partial}\).  In
the ordered independent bivector basis
\(([01],[02],[03],[23],[31],[12])\), the Maxwell vacuum is
\(\diag(-1,-1,-1,+1,+1,+1)\).  This is the origin of the electric-electric minus
signs below.

For the restricted meridional basis \(E_1=[01]\), \(E_2=[02]\), \(E_3=[12]\),
the first-order near-metric dictionary is
\begin{equation}
\begin{aligned}
 \phi_1&=-\delta\chi^{0101},&
 \phi_2&=-\sqrt2\,\delta\chi^{0102},&
 \phi_4&=-\delta\chi^{0202},\\
 \phi_{12}&=+\delta\chi^{1212},&
 \phi_{17}&=+\sqrt2\,\delta\chi^{0112},&
 \phi_{16}&=+\sqrt2\,\delta\chi^{0212}.
\end{aligned}
\label{eq:sssw-projection}
\end{equation}
It projects a supplied four-index tensor onto the selected meridional SSSW-frame
variables; it is not a nonlinear inverse from an arbitrary finite constitutive
tensor to SSSW variables.

Substituting the local DH tensor \eqref{eq:dh-chi} gives
\begin{align}
\phi^{\rm DH}_1
&=\frac{2}{m_e^2}\left[-2c_RR+c_{\rm Ric}(R^{00}-R^{11})+4c_{\rm Riem}R^{0101}\right],\nonumber\\
\phi^{\rm DH}_2
&=\frac{2\sqrt2}{m_e^2}\left[-c_{\rm Ric}R^{12}+4c_{\rm Riem}R^{0102}\right],\nonumber\\
\phi^{\rm DH}_4
&=\frac{2}{m_e^2}\left[-2c_RR+c_{\rm Ric}(R^{00}-R^{22})+4c_{\rm Riem}R^{0202}\right],\nonumber\\
\phi^{\rm DH}_{12}
&=-\frac{2}{m_e^2}\left[2c_RR+c_{\rm Ric}(R^{11}+R^{22})+4c_{\rm Riem}R^{1212}\right],
\label{eq:dh-symmetric}
\end{align}
and the selected mixed entries are
\begin{align}
\phi^{\rm DH}_{17}
&=\sqrt2\,\delta\chi^{0112}_{\rm DH}
 =\frac{2\sqrt2}{m_e^2}\left[c_{\rm Ric}R^{02}-4c_{\rm Riem}R^{0112}\right],\nonumber\\
\phi^{\rm DH}_{16}
&=\sqrt2\,\delta\chi^{0212}_{\rm DH}
 =-\frac{2\sqrt2}{m_e^2}\left[c_{\rm Ric}R^{01}+4c_{\rm Riem}R^{0212}\right].
\label{eq:dh-mixed}
\end{align}

\subsection{Transverse parity-odd block of the Drummond--Hathrell projection}
\label{subsec:transverse-dh}

The complementary parity block uses \(V_- =\spanop\{[03],[13],[23]\}\).  Its raw
DH entries are
\begin{align}
\delta\chi^{0303}_{\rm DH}
&=\frac{2}{m_e^2}\left[2c_RR-c_{\rm Ric}(R^{00}-R^{33})-4c_{\rm Riem}R^{0303}\right],
\label{eq:dh-trans-0303}\\
\delta\chi^{1313}_{\rm DH}
&=-\frac{2}{m_e^2}\left[2c_RR+c_{\rm Ric}(R^{11}+R^{33})+4c_{\rm Riem}R^{1313}\right],
\label{eq:dh-trans-1313}\\
\delta\chi^{2323}_{\rm DH}
&=-\frac{2}{m_e^2}\left[2c_RR+c_{\rm Ric}(R^{22}+R^{33})+4c_{\rm Riem}R^{2323}\right],
\label{eq:dh-trans-2323}
\end{align}
and
\begin{align}
\delta\chi^{0313}_{\rm DH}
&=-\frac{2}{m_e^2}\left[c_{\rm Ric}R^{01}+4c_{\rm Riem}R^{0313}\right],
\label{eq:dh-trans-0313}\\
\delta\chi^{0323}_{\rm DH}
&=-\frac{2}{m_e^2}\left[c_{\rm Ric}R^{02}+4c_{\rm Riem}R^{0323}\right],
\label{eq:dh-trans-0323}\\
\delta\chi^{1323}_{\rm DH}
&=-\frac{2}{m_e^2}\left[c_{\rm Ric}R^{12}+4c_{\rm Riem}R^{1323}\right].
\label{eq:dh-trans-1323}
\end{align}
With the same electric-magnetic sign convention as \eqref{eq:sssw-projection},
\begin{equation}
\phi_3=-\delta\chi^{0303},\qquad
\phi_5=+\delta\chi^{1313},\qquad
\phi_6=+\delta\chi^{2323},
\label{eq:transverse-diag-vars}
\end{equation}
and
\begin{equation}
\phi_{13}=\sqrt2\,\delta\chi^{0313},\qquad
\phi_{14}=\sqrt2\,\delta\chi^{0323},\qquad
\phi_{15}=\sqrt2\,\delta\chi^{1323}.
\label{eq:transverse-mixed-vars}
\end{equation}
These formulae fix only a coordinate convention for comparing the two parity
blocks; the full calculation uses the raw bivector tensor.

For the formal flat-slice shift-potential background of \cref{app:shift-background},
\(K_{ij}=D_iD_j\Phi\) on a flat slice.  The ADM momentum identity gives
\(R_{0i}=0\), and the Codazzi relation gives
\begin{equation}
R_{\hat0\hat i\hat j\hat k}=D_{\hat j}K_{\hat k\hat i}-D_{\hat k}K_{\hat j\hat i}
=[D_{\hat j},D_{\hat k}]D_{\hat i}\Phi=0 .
\label{eq:transverse-codazzi-zero}
\end{equation}
Consequently
\begin{equation}
 \delta\chi^{0313}_{\rm DH}=\delta\chi^{0323}_{\rm DH}=0,
 \label{eq:transverse-me-zero}
\end{equation}
while the purely spatial off-diagonal entry can remain nonzero.  With
\(K_{13}=K_{23}=0\), the flat-slice Gauss relation gives
\begin{equation}
R^{1323}=K_{12}K_{33}=\mathcal B\mathcal D,
\label{eq:transverse-spatial-mix}
\end{equation}
so the transverse parity block is generically populated even when the selected
transverse magneto-electric slots vanish.  The corresponding full DH supplied
state may be recorded as
\begin{equation}
\Psi_{\rm DH}^{\rm full}=
\left(\phi^{\rm DH}_1,\phi^{\rm DH}_2,\phi^{\rm DH}_4,
\phi^{\rm DH}_{12},0,0\;\middle|\;
\phi^{\rm DH}_3,\phi^{\rm DH}_5,\phi^{\rm DH}_6,0,0,\phi^{\rm DH}_{15}\right),
\label{eq:full-DH-state}
\end{equation}
where \(\phi^{\rm DH}_{15}\) includes the Ricci and Gauss terms in
\eqref{eq:dh-trans-1323} and \eqref{eq:transverse-spatial-mix}.  This appendix is
a sign and selection-rule check only; it is not a gravitational source model and
not the full dispersive Drummond--Hathrell photon-propagation problem.

\section{Supplementary symbolic framework}
\label{app:verification-bundle}

The supplementary archive \texttt{cqg\_symbolic\_supplement\_v1.zip} contains a
compact symbolic framework supporting the algebraic claims made in the text.  The
archive is intended as a reproducibility supplement; the invariant formulae in
the article remain the mathematical definitions.

The three verification files are as follows.
\begin{description}
\item[\texttt{verify\_restricted\_sssw\_tr.py}] Reconstructs the restricted SSSW
area metric from \((u,s,w,\ell,a,b,c,d)\), forms the Tamm--Rubilar tensor
\eqref{eq:app-TR-tensor}, contracts it with \(q_a=(\omega,x,y,z)\),
monic-normalizes the result, and writes the expanded coefficients
\(C_2,C_1,C_0\) to \texttt{outputs/sssw\_C2\_C1\_C0\_expanded.txt}.  The same
script checks the metric-vacuum specialization, the mixed-off coefficient checks
of \cref{app:expanded-coefficients}, the first-order near-metric expansion, and
the SSSW-frame cancellation of the cubic coefficient.

\item[\texttt{verify\_full\_parity\_generic.py}] Implements the full local
parity-invariant generator of \cref{thm:complete-parity-polynomial} in the
basis \(([01],[02],[12] \mid [03],[13],[23])\).  It checks the metric vacuum,
the Schwarzschild Drummond--Hathrell factorization, the exact even-in-\(z\)
structure, and a rational near-metric supplied tensor with both \(G_{++}\) and
\(G_{--}\) populated for which \(C_3\ne0\).

\item[\texttt{verify\_flat\_slice\_shift.py}] Evaluates the flat-slice
shift-potential consistency test.  It checks the ADM momentum identity,
Codazzi selection rule, meridional support determinant, and Gauss-equation
selection rule used in the projection dictionary.
\end{description}

The archive also contains the shared utility file \texttt{tr\_utils.py}, the
runner \texttt{run\_all.py}, \texttt{requirements.txt}, a \texttt{README.md},
and the generated reports in the \texttt{outputs/} directory.  The complete test
suite is reproduced by running
\begin{verbatim}
python -m pip install -r requirements.txt
python run_all.py
\end{verbatim}
from the unpacked archive directory.  Each script raises an exception if a check
fails, and successful runs regenerate the report files in \texttt{outputs/}.

\paragraph{Code availability.}
The symbolic verification bundle supporting the algebraic checks in
\cref{app:verification-bundle} is supplied with the arXiv version as the
ancillary file \texttt{cqg\_symbolic\_supplement\_v1.zip}.  The archive
contains the restricted SSSW Tamm--Rubilar reconstruction, the full
parity-invariant generator, generic rational near-metric tests with
\(C_3\ne0\), the Schwarzschild factorization check, and the flat-slice
ADM/Codazzi/Gauss consistency tests.  The archive SHA256 checksum is
\texttt{d11b90be47a3ae0adf335f612940a330a8bd6727f8e65b293c530ac8f8dca140}.

\printbibliography

\end{document}